\documentclass[3p,preprint]{elsarticle}
\usepackage{amssymb}
\usepackage{amsmath}
\usepackage{amsthm}
\usepackage{color}
\usepackage{enumitem}
\usepackage{mathtools}
\usepackage{enumitem}
\theoremstyle{definition}
\newtheorem{definition}{Definition}
\newtheorem{proposition}{Proposition}
\newtheorem{observation}{Observation}
\newtheorem{lemma}{Lemma}
\newtheorem{claim}{Claim}
\newtheorem{example}{Example}

\newtheorem{corollary}{Corollary}

\newtheorem{theorem}{Theorem}

\usepackage{enumitem}
\setlist[itemize]{itemsep=2pt,parsep=0pt,partopsep=0pt}
\setlist[enumerate]{itemsep=2pt,parsep=0pt,partopsep=0pt}

\usepackage{caption}
\usepackage{tikz}
\usetikzlibrary{arrows.meta,positioning,fit}
\usepackage{subcaption}
\usepackage{booktabs}

\begin{document}
\usetikzlibrary{arrows.meta}
    \begin{frontmatter}
        \title{Condorcet-type properties of the linear ordering problem with ties}
         \author[label1]{Daichi Kawashima}
         \affiliation[label1]{organization={Department of Advanced Sciences, Hosei University},
                addressline={3-7-2}, 
                city={Koganei},
                postcode={184-8584}, 
                state={Tokyo Prefecture},
                country={Japan}}
        \author[label1]{Noriyoshi Sukegawa}
    
\begin{abstract}
The Kemeny rule aggregates multiple strict rankings into a single strict ranking that minimizes the sum of its distances from the input rankings. 
The resulting optimization problem, called the Kemeny problem (\texttt{KP}), is a special case of the linear ordering problem (\texttt{LOP}). 
The Kemeny rule satisfies several desirable properties in social choice theory, including the extended Condorcet criterion (\texttt{XCC}). 
Ando et al.\ strengthened this result by introducing the strong Condorcet criterion (\texttt{SCC}) and showing that it holds for every optimal solution to an arbitrary \texttt{LOP} instance. 
Yoo and Escobedo extended the Kemeny rule to rankings with ties and showed that the resulting rule satisfies the non-strict extended Condorcet criterion (\texttt{NXCC}). 
This criterion gives a condition under which one candidate must be ranked strictly above another in every optimal solution. 
In this paper, we introduce the non-strict strong Condorcet criterion (\texttt{NSCC}), a counterpart of the \texttt{SCC} for rankings with ties, and show that it holds for every optimal solution to an arbitrary instance of the linear ordering problem with ties (\texttt{LOPT}). 
We also establish a complementary structural property that gives conditions under which two candidates must be tied in every optimal solution to an arbitrary \texttt{LOPT} instance.
\end{abstract}

\begin{keyword}
Kemeny rule \sep Condorcet criterion \sep linear ordering problem \sep ranking with ties
\end{keyword}

\end{frontmatter}
    
\section{Introduction}

Rank aggregation has numerous applications, including meta-search~\cite{Dwork2001} and biological databases~\cite{Sese2001}, and also plays a central role in social choice theory.
Among rank-aggregation rules, the Kemeny rule (or Kemeny--Young rule), proposed by Kemeny~\cite{Kemeny1959}, is well known for satisfying important criteria in social choice theory, including consistency and the Condorcet criterion (\texttt{CC}). 
The simultaneous satisfaction of these two criteria is known to be rare~\cite{Young1978}. 
The objective of this study is to clarify the structural properties of the outputs of the Kemeny rule when ties are allowed in rankings. 

\subsection{Background}

Suppose that there are multiple voters and each voter provides a ranking of a common set of candidates. 
A Kemeny ranking is a ranking that minimizes the sum of its distances from the rankings provided by the voters. 
The Kemeny rule is a rule that always returns a Kemeny ranking.  
The problem of finding a Kemeny ranking is called the Kemeny problem (\texttt{KP}), and is known to be a special case of the linear ordering problem (\texttt{LOP}). 
In the \texttt{KP}, the pairwise weights are derived from the input rankings, whereas the \texttt{LOP} allows arbitrary pairwise weights and seeks a ranking of maximum total weight. 
Therefore, properties of optimal solutions to the \texttt{LOP} can be used to study the Kemeny rule.

A candidate who defeats every other candidate in pairwise comparisons is called a Condorcet winner.
An aggregation rule satisfies the Condorcet criterion (\texttt{CC}) if the rule ranks the Condorcet winner first whenever one exists. 
Pairwise-comparison outcomes can be represented by a simple directed graph, called the majority graph. 
Truchon~\cite{Truchon1998} used this majority graph to introduce the extended Condorcet criterion (\texttt{XCC}) and showed that the Kemeny rule satisfies it. 
Ando et al.~\cite{Andoetal2022} strengthened the \texttt{XCC} to the \texttt{SCC} and showed that the \texttt{SCC} holds not only for Kemeny rankings but for every optimal solution to an arbitrary \texttt{LOP} instance.
The \texttt{SCC} implies the \texttt{XCC}, which in turn implies the \texttt{CC}.

The studies described above assume strict rankings.
Yoo and Escobedo~\cite{YooEscobedo2021} extended the Kemeny rule to rankings with ties and introduced the non-strict extended Condorcet criterion (\texttt{NXCC}), a counterpart of the \texttt{XCC} for this setting.
The \texttt{NXCC} gives conditions under which one candidate must be ranked strictly above another in every optimal solution. 
However, it neither provides a counterpart of the stronger \texttt{SCC} nor gives conditions under which two candidates must be tied in every optimal solution.

Just as the \texttt{KP} is a special case of the \texttt{LOP}, the Kemeny problem with ties (\texttt{KPT}) can be generalized by allowing arbitrary pairwise weights. 
We call the resulting optimization problem the linear ordering problem with ties (\texttt{LOPT}). 
Figure~\ref{fig:problem-classes} summarizes the relations among these four problem classes. 
The horizontal arrows allow arbitrary pairwise weights, while the vertical arrows allow ties.

\begin{figure}[t]
\centering
\begin{tikzpicture}[>={Stealth}]
\node (KP)   at (0,0)   {\texttt{KP}};
\node (LOP)  at (7,0)   {\texttt{LOP}};
\node (KPT)  at (0,-1.5)  {\texttt{KPT}};
\node (LOPT) at (7,-1.5)  {\texttt{LOPT}};
\draw[->] (KP) -- node[above] {allow arbitrary pairwise weights} (LOP);
\draw[->] (KP) -- node[left] {allow ties} (KPT);
\draw[->] (LOP) -- node[right] {allow ties} (LOPT);
\draw[->] (KPT) -- node[below] {allow arbitrary pairwise weights} (LOPT);
\end{tikzpicture}
\caption{Relations among the four problem classes.}
\label{fig:problem-classes}
\end{figure}

\subsection{Our contributions}

We investigate the structural properties of optimal solutions to the \texttt{LOPT} and establish the following two results.

\begin{enumerate}
\item
We introduce the non-strict strong Condorcet criterion (\texttt{NSCC}), a counterpart of the \texttt{SCC} for rankings with ties, and show that it holds for every optimal solution to an arbitrary \texttt{LOPT} instance. 
The \texttt{NSCC} implies the \texttt{NXCC}, providing a finer characterization of strict comparisons in optimal rankings.

\item
We establish a complementary structural property that gives conditions under which two candidates must be tied in every optimal solution to an arbitrary \texttt{LOPT} instance. 
This property concerns ties and is not captured by either the \texttt{NXCC} or the \texttt{NSCC}.
\end{enumerate}

Our analysis extends the graph-theoretic approach of Ando et al.~\cite{Andoetal2022} from the \texttt{LOP} to the \texttt{LOPT}. 
To accommodate ties, we represent rankings with ties by unicycle-free directed graphs and establish an equivalence between the \texttt{LOPT} and a corresponding graph optimization problem.

\subsection{Organization}

The remainder of this paper is organized as follows. 
Section~\ref{Preliminaries} introduces the necessary preliminaries and formulates the \texttt{LOPT}. 
Section~\ref{MainResults} states the main results and gives an example.
Section~\ref{GraphReformulation} formulates the \texttt{LOPT} as a graph optimization problem.
Section~\ref{ProofsMainResults} proves the main results.
Section~\ref{RelatedWork} discusses related work.
Section~\ref{Conclusion} gives concluding remarks.

\section{Preliminaries}\label{Preliminaries}

\subsection{Kemeny rule and the linear ordering problem}

Let $V$ be a set of $n$ candidates. 
Let $\mathcal{U}$ denote the set of all unordered pairs of distinct
candidates: 
\begin{align*}
\mathcal{U}\coloneqq \{\{u,v\}:u,v\in V,\ u\neq v\}.
\end{align*} 
A strict ranking is a bijection $\pi:V\to[n]$, where $[n]\coloneqq\{1,2,\ldots,n\}$. 
If $\pi(u)<\pi(v)$, we say that $u$ is ranked above $v$ and write $u\succ_\pi v$.
For any two strict rankings $\pi,\rho$, their Kendall tau distance is
defined as
\begin{align*}
d_{\mathrm{K}}(\pi,\rho)
\coloneqq
\frac{1}{2}
\sum_{\{u,v\}\in\mathcal{U}}
\left| \operatorname{sgn}(\pi(u)-\pi(v)) - \operatorname{sgn}(\rho(u)-\rho(v)) \right|,
\end{align*}
where $\operatorname{sgn}(x)$ takes the values $-1$, $0$, and $1$ when $x<0$, $x=0$, and $x>0$, respectively.
Since $\pi$ and $\rho$ are strict, each sign function takes a value in $\{-1,1\}$.
Hence, $d_{\mathrm{K}}(\pi,\rho)$ counts the candidate pairs whose relative orders differ between $\pi$ and $\rho$.

Suppose that $\Pi$ is a finite collection of strict rankings. 
A strict ranking $\pi$ that minimizes
\begin{align*}
f^{\texttt{KP}}_\Pi(\pi)
\coloneqq
\sum_{\rho\in\Pi} d_{\mathrm{K}}(\pi,\rho)
\end{align*}
over all strict rankings on $V$ is called a Kemeny ranking. 
The problem of finding a Kemeny ranking is called the Kemeny problem (\texttt{KP}), which is known to be NP-hard even for $|\Pi|=4$~\cite{Bartholdi1989}.

The linear ordering problem (\texttt{LOP}) generalizes the \texttt{KP}. 
For each $\{u,v\}\in \mathcal{U}$, let $w^\succ_{uv}$ and $w^\succ_{vu}$ denote the nonnegative weights assigned when $u$ is ranked above $v$ and when $v$ is ranked above $u$, respectively. 
Let $W$ denote the collection of all these weights. 
The \texttt{LOP} seeks a ranking $\pi$ that maximizes
\begin{align*}
f^{\texttt{LOP}}_W(\pi)
\coloneqq
\sum_{\{u,v\}\in \mathcal{U}}
\left( 
w^\succ_{uv} \mathbf{1}[u\succ_\pi v]
+
w^\succ_{vu} \mathbf{1}[v\succ_\pi u]
\right),
\end{align*}
where $\mathbf{1}[\cdot]$ denotes the indicator function, which equals one if the statement in brackets holds and zero otherwise. 
We denote this instance as $\texttt{LOP}(W)$.
The \texttt{KP} reduces to the \texttt{LOP} by setting 
\begin{align*}
w^\succ_{uv}
\coloneqq
\left|\{\rho \in \Pi : u\succ_{\rho} v\}\right|.
\end{align*}
Indeed, for every ranking $\pi$, we have 
\begin{align*}
f^{\texttt{KP}}_\Pi(\pi)+f^{\texttt{LOP}}_W(\pi)
=
|\Pi||\mathcal U|.
\end{align*}
Consequently, a ranking minimizes the \texttt{KP} objective if and only if it maximizes that of the corresponding \texttt{LOP} instance.

\subsection{Theoretical properties of optimal solutions of the linear ordering problem}
We next describe three Condorcet-type properties of optimal solutions to the \texttt{LOP}. 
\begin{definition}[Condorcet winner]
For any instance $\texttt{LOP}(W)$, a candidate $u$ is a Condorcet winner if
\begin{align*}
w^\succ_{uv}>w^\succ_{vu}
\quad
\text{for every }v\in V\setminus\{u\}.
\end{align*}
\end{definition}
The Condorcet criterion requires such a candidate to be ranked first. 
For the \texttt{LOP}, this requirement can be stated as the following property of its optimal solutions.
\begin{proposition}[\texttt{CC}]
For any instance $\texttt{LOP}(W)$, if $u$ is a Condorcet winner of $\texttt{LOP}(W)$, then $\pi(u)=1$ for every optimal solution $\pi$.
\end{proposition}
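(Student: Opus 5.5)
We argue by contradiction using an adjacent-swap exchange. Let $\pi$ be an optimal solution of $\texttt{LOP}(W)$ and suppose $\pi(u)=k>1$. Let $v$ be the candidate with $\pi(v)=k-1$, the one placed immediately above $u$. Define $\pi'$ from $\pi$ by exchanging the positions of $u$ and $v$: set $\pi'(u)=k-1$, $\pi'(v)=k$, and $\pi'(x)=\pi(x)$ for every other $x$. This $\pi'$ is again a bijection $V\to[n]$, so it is feasible.

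Next we compare the objective values pair by pair. For a pair $\{x,y\}\in\mathcal U$ not equal to $\{u,v\}$, the relative order of $x$ and $y$ is the same in $\pi$ and $\pi'$. If neither element is $u$ or $v$, this is immediate. If, say, $x=u$ and $y\notin\{u,v\}$, then $\pi(y)\notin\{k-1,k\}$. Hence $\pi(y)<k-1$ or $\pi(y)>k$, and in both cases $y$ keeps the same side of $u$ under $\pi'$. The same holds when $x=v$. Therefore every term in $f^{\texttt{LOP}}_W$ is unchanged except the one for $\{u,v\}$, and
\begin{align*}
f^{\texttt{LOP}}_W(\pi')-f^{\texttt{LOP}}_W(\pi)=w^\succ_{uv}-w^\succ_{vu}.
\end{align*}

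Since $u$ is a Condorcet winner, $w^\succ_{uv}>w^\succ_{vu}$, so $f^{\texttt{LOP}}_W(\pi')>f^{\texttt{LOP}}_W(\pi)$. This contradicts the optimality of $\pi$, and hence $\pi(u)=1$.

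The only step needing care is checking that the swap affects no other pair. This is exactly why the argument must use the candidate immediately above $u$ rather than an arbitrary one: moving $u$ straight to the top could reverse several pairs whose net effect is not controlled by a single weight comparison. Choosing an adjacent swap reduces the whole argument to the single inequality in the definition of a Condorcet winner.
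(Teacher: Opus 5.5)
Your adjacent-swap argument is correct and complete. It is not the paper's route, however: the paper gives no standalone proof of the \texttt{CC} and obtains it from the hierarchy $\texttt{SCC}\Rightarrow\texttt{XCC}\Rightarrow\texttt{CC}$. A Condorcet winner $u$ has only outgoing arcs in $G_W$, so $\{u\}$ is the first component of $\mathcal P_W$, and the \texttt{XCC} then forces $\pi(u)<\pi(v)$ for every $v\neq u$. The \texttt{XCC} is in turn inherited from the \texttt{SCC} of Ando et al., which is proved through a feedback-arc-set reformulation by a component-level exchange; the paper's proof of Theorem~\ref{thm:strict} is the \texttt{LOPT} analogue of that exchange.

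Your proof is self-contained and uses only the single pairwise inequality for the candidate directly above $u$. The paper's route costs nothing once the \texttt{SCC} is available, and it places the \texttt{CC} within the hierarchy. Your swap idea in fact also proves the \texttt{XCC}: any violation forces some adjacent pair whose component indices descend, and swapping that pair is strictly improving. It does not prove the \texttt{SCC}. If $(s,t)\in A_W^+$ but $t$ is ranked above $s$, the candidates between them can be weight-tied with their neighbours, so no single adjacent swap need be strictly improving. That is why a global rearrangement of components is used there.

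One small correction to your closing remark. Moving $u$ straight to the top by insertion, shifting the candidates above it down one place, would also work. Only the pairs $\{u,y\}$ change, and each gains $w^\succ_{uy}-w^\succ_{yu}>0$. What would fail is transposing $u$ with the top candidate.
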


A Condorcet winner does not necessarily exist because pairwise preferences may form cycles. 
To describe the structure of optimal rankings in such cases, we represent the pairwise weights by directed graphs. 
In what follows, we regard each candidate as a vertex.

\begin{definition}[Majority graphs for the \texttt{LOP}]
For any instance $\texttt{LOP}(W)$, define the majority graph $G_W=(V,A_W)$ and the strict majority graph $G_W^+=(V,A_W^+)$ by
\begin{align*}
A_W
&=
\{(u,v):\{u,v\}\in\mathcal U,\
w^\succ_{uv}\geq w^\succ_{vu}\},\\
A_W^+
&=
\{(u,v):\{u,v\}\in\mathcal U,\
w^\succ_{uv}>w^\succ_{vu}\}.
\end{align*}
Let $\mathcal{P}_W$ and $\mathcal{P}_W^+$ denote their strongly connected component decompositions.
\end{definition}

A directed graph is said to be semicomplete if at least one of $(u,v)$ and $(v,u)$ is present for every pair of distinct vertices $u,v$. 
By definition, the majority graph $G_W$ is semicomplete.
Between any two distinct strongly connected components of $G_W$, all arcs have the same direction and correspond to strict pairwise comparisons. 
We therefore write
$\mathcal{P}_W=\{V_1,V_2,\ldots,V_m\}$
and index the components so that
\begin{align*}
V_i\times V_j\subseteq A_W^+
\quad
\text{for every }1\le i<j\le m.
\end{align*}
In other words, for any $V_i,V_j\in\mathcal{P}_W$ with $i<j$, every candidate in $V_i$ strictly defeats every candidate in $V_j$ in their pairwise comparison.

\begin{proposition}[\texttt{XCC}]
For any instance $\texttt{LOP}(W)$, let $V_i,V_j\in\mathcal{P}_W$ with $i<j$. 
Then, for every $u\in V_i$, every $v\in V_j$, and every optimal solution $\pi$ to $\texttt{LOP}(W)$, we have $\pi(u)<\pi(v)$.
\end{proposition}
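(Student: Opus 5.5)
We argue by contradiction via an exchange argument. Fix an optimal solution $\pi$ to $\texttt{LOP}(W)$, and suppose some $u\in V_i$ and $v\in V_j$ with $i<j$ satisfy $\pi(v)<\pi(u)$. We will construct another strict ranking with strictly larger objective value, which contradicts optimality. The natural candidate is the ranking that keeps the relative order inside each component but stably moves all members of $V_1\cup\cdots\cup V_k$ in front of all members of $V_{k+1}\cup\cdots\cup V_m$ for a suitable cut index $k$.

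\textbf{Step 1: choose the cut.} Let $S=V_1\cup\cdots\cup V_k$ and $T=V\setminus S$, where $k$ is chosen with $i\le k<j$, for instance $k=i$. Then $u\in S$ and $v\in T$. By the indexing of $\mathcal P_W$, every pair $(s,t)\in S\times T$ lies in $A_W^+$, so $w^\succ_{st}>w^\succ_{ts}$.

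\textbf{Step 2: define the new ranking.} Let $\pi'$ list the elements of $S$ in their $\pi$-order, followed by the elements of $T$ in their $\pi$-order. Pairs lying both in $S$, or both in $T$, keep their relative order, so their contributions to $f^{\texttt{LOP}}_W$ are unchanged. A pair $\{s,t\}$ with $s\in S$ and $t\in T$ has contribution $w^\succ_{st}$ under $\pi'$. Under $\pi$ its contribution was either $w^\succ_{st}$ or $w^\succ_{ts}<w^\succ_{st}$.

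\textbf{Step 3: compare objective values.} Summing over pairs gives $f^{\texttt{LOP}}_W(\pi')-f^{\texttt{LOP}}_W(\pi)=\sum (w^\succ_{st}-w^\succ_{ts})$, where the sum runs over the cross pairs that $\pi$ inverts, namely those with $t\succ_\pi s$. Every term is strictly positive, and the pair $\{u,v\}$ is one of them. Hence $f^{\texttt{LOP}}_W(\pi')>f^{\texttt{LOP}}_W(\pi)$, contradicting the optimality of $\pi$. Therefore $\pi(u)<\pi(v)$.

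The only point needing care is Step 1, specifically the claim that every arc between distinct strongly connected components of the semicomplete graph $G_W$ is strict and consistently oriented. This is already recorded in the paper when the indexing of $\mathcal P_W$ is introduced. It holds because a non-strict pair yields arcs in both directions in $G_W$, which would merge the two components. Once this is in place, the moving-block argument involves no cycle reasoning and needs no assumption about where the elements of $S$ sit within $\pi$.
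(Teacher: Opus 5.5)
Your proof is correct, but it follows a different route from the paper. The paper never proves this proposition directly. It attributes it to Truchon and obtains it as a corollary of the \texttt{SCC}: $A_W^+\subseteq A_W$ makes $\mathcal P_W^+$ a refinement of $\mathcal P_W$, so $u\in V_i$ and $v\in V_j$ with $i<j$ lie in distinct components of $\mathcal P_W^+$ and $(u,v)\in A_W^+$. The \texttt{SCC}, like its \texttt{LOPT} analogue Claim~\ref{lem:intercomponent-arcs}, is proved on an arc-set reformulation by a global exchange. One keeps the within-component arcs of an optimal arc set and replaces every between-component arc by the majority arc. One then checks that no cycle crosses components, so the new set is still feasible, and compares contributions pair by pair.

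You instead stay with rankings and use one cut $S=V_1\cup\cdots\cup V_k$, $T=V\setminus S$ with a stable concatenation. The result is automatically a strict ranking, so no feasibility or acyclicity check is needed and the argument is fully self-contained. Its only structural input is that no cross pair is strictly oriented from $T$ to $S$, which is weaker than the total order of components you invoke. Pairs with $w^\succ_{st}=w^\succ_{ts}$ contribute zero to your difference, so choosing $S$ as the set of vertices that can reach $u$ in $G_W^+$ proves the finer \texttt{SCC} by the same computation. The analogous cut in $H_W$ (all of $S$ placed strictly above all of $T$, ties inside each part kept) yields Theorem~\ref{thm:strict} as well.

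The paper's route, in return, gets the \texttt{XCC} at no cost from a stronger known theorem. It also sits inside the arc-set framework that the paper reuses for its results on ties.
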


If $w^\succ_{uv}=w^\succ_{vu}$, then $G_W$ contains the two arcs $(u,v)$ and $(v,u)$, whereas $G_W^+$ contains neither. 
Therefore, the graph $G_W^+$ may yield a finer strongly connected component decomposition than $G_W$. 
The \texttt{SCC} is based on this finer decomposition.

\begin{proposition}[\texttt{SCC}]\label{scc}
For any instance $\texttt{LOP}(W)$, let $u,v\in V$ be such that they belong to distinct components of $\mathcal{P}_W^+$ and $(u,v)\in A_W^+$. 
Then, for every optimal solution $\pi$ to $\texttt{LOP}(W)$, we have $\pi(u)<\pi(v)$.
\end{proposition}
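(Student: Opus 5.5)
We argue by contradiction, using an exchange argument on an optimal strict ranking. Let $\pi$ be optimal for $\texttt{LOP}(W)$, and let $u,v$ lie in distinct components of $\mathcal{P}_W^+$ with $(u,v)\in A_W^+$. Suppose $\pi(v)<\pi(u)$. Write the ranking as a sequence $v=x_0,x_1,\ldots,x_k=u$ of the candidates occupying positions $\pi(v),\pi(v)+1,\ldots,\pi(u)$. We will modify $\pi$ only on this segment and produce a ranking of strictly larger weight, contradicting optimality.

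\textbf{Step 1: adjacent pairs are weakly majority-ordered.} For any optimal $\pi$ and any two consecutive candidates $x_i,x_{i+1}$, swapping them changes the objective by exactly $w^\succ_{x_{i+1}x_i}-w^\succ_{x_ix_{i+1}}$. Optimality therefore forces $w^\succ_{x_ix_{i+1}}\ge w^\succ_{x_{i+1}x_i}$. Equivalently, either $(x_i,x_{i+1})\in A_W^+$ or the pair is tied.

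\textbf{Step 2: find a strict arc in the chain.} Consider the walk $v=x_0\to x_1\to\cdots\to x_k=u$. Each step is either a strict arc of $G_W^+$ or a tie. If every step were a strict arc, this walk together with $(u,v)\in A_W^+$ would form a directed cycle in $G_W^+$. That would put $u$ and $v$ in the same component, contradicting the hypothesis. Hence some adjacent pair $x_j,x_{j+1}$ on the segment is tied.

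\textbf{Step 3: induction on the segment length.} A tied adjacent pair can be swapped without changing the objective, so the result is again optimal. The plan is to use such weight-neutral swaps to move $u$ and $v$ closer while preserving optimality, and then induct on $\pi(u)-\pi(v)$.

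The base case is $\pi(u)-\pi(v)=1$. Then $u,v$ are adjacent, and swapping them gains $w^\succ_{uv}-w^\succ_{vu}>0$, a contradiction.

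For the inductive step, a tied pair $x_j,x_{j+1}$ with neither element equal to $u$ or $v$ does not help, since swapping it keeps the distance between $u$ and $v$ unchanged. We therefore need a more global argument, described next.

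\textbf{Step 4 (expected main obstacle): a block move.} I would use the following stronger move. Let $S$ be the set of candidates in the segment that are reachable from $v$ in $G_W^+$ restricted to the segment, where the walk must follow the order of $\pi$. Alternatively, take $S$ to be the set of segment candidates in the same $\mathcal{P}_W^+$-component as $v$, or the set reachable from $v$ by strict arcs.

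Move the block $S$ (containing $v$) to just after $u$, preserving the internal order of $S$. The objective change is a sum over pairs $(s,t)$ with $s\in S$ and $t\notin S$ in the segment, of $w^\succ_{ts}-w^\succ_{st}$. By the choice of $S$ via strict reachability closure, no strict arc leaves $S$ towards segment vertices outside $S$. Hence every such term is nonnegative. The term for the pair $(v,u)$ is strictly positive, because $u\notin S$: if $u$ were in $S$, strict reachability from $v$ to $u$ combined with $(u,v)\in A_W^+$ would put $u$ and $v$ in the same component. So the move strictly increases the objective, a contradiction.

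The delicate point is defining $S$ so that all cross pairs are weakly favorable. I would take $S$ to be all $x$ in the segment reachable from $v$ in $G_W^+$. Then for $s\in S$ and $t\notin S$ we have $(s,t)\notin A_W^+$, so $w^\succ_{ts}\ge w^\succ_{st}$. This needs no ordering condition at all, which makes Steps 1 to 3 unnecessary.
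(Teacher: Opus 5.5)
Your Step 4, with $S$ the set of segment candidates reachable from $v$ in $G_W^+$, is a correct and complete proof, and it takes a different route from the paper. No arc of $A_W^+$ leaves $S$ toward the rest of the segment. Also $u\notin S$, because otherwise $(u,v)\in A_W^+$ would put $u$ and $v$ in one component of $\mathcal P_W^+$. So moving $S$ behind $u$ gains at least $w^\succ_{uv}-w^\succ_{vu}>0$.

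Three small fixes:
\begin{itemize}
\item The objective change is the sum of $w^\succ_{ts}-w^\succ_{st}$ only over pairs with $\pi(s)<\pi(t)$. Pairs in which $t$ is already ahead of $s$ keep their order. This is harmless, since every term is nonnegative.
\item Discard the alternative of taking $S$ to be the segment candidates in $v$'s component of $\mathcal P_W^+$. A strict arc can leave that component toward a later segment candidate, and reversing it can cost more than the pair $\{u,v\}$ gains.
\item Steps 1--3 can be deleted.
\end{itemize}

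The paper does not prove the \texttt{SCC} directly. It cites Ando et al.\ and recovers the \texttt{SCC} as the tie-free case of Theorem~\ref{thm:strict}, whose proof (Claim~\ref{lem:intercomponent-arcs}) works in the arc-set reformulation. There, all inter-component relations of an optimal solution are replaced at once by the arcs of the majority graph, which amounts to reordering the whole condensation while keeping intra-component relations. The paper then has to verify unicycle-freeness, using the fact that a cycle crossing components would contradict strong connectivity, before comparing objectives pair by pair.

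You instead make one local exchange across a single reachability-closed set, directly on the strict ranking. This needs no graph reformulation and no feasibility check, and it uses only that $u$ is not reachable from $v$ in $G_W^+$.

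The paper's global construction gives more: a single optimal solution that respects every inter-component arc simultaneously, which is what Corollary~\ref{cor:component-structure} uses. Phrased through unicycle-free arc sets, it is also the form that carries over directly to rankings with ties.
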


These criteria form a hierarchy. 
Since $\mathcal{P}_W^+$ refines $\mathcal{P}_W$, the \texttt{SCC} implies the \texttt{XCC}. 
Moreover, a Condorcet winner forms the first singleton component of $\mathcal{P}_W$. 
Consequently, the \texttt{XCC} implies the \texttt{CC}. 

Figure~\ref{fig:cc-xcc-scc} illustrates the differences among the three criteria. 
In Figure~\ref{fig:cc-xcc-scc}(a), the \texttt{CC} requires the Condorcet winner $a$ to be ranked first, while the \texttt{XCC} further requires every candidate in $\{b,c,d\}$ to be ranked above $e$. 
The tied comparisons between $b$ and $d$ and between $c$ and $d$, represented in Figure~\ref{fig:cc-xcc-scc}(a) by
\begin{align*}
(b,d),(d,b),(c,d),(d,c)\in A_W,
\end{align*}
are absent from $G_W^+$ in Figure~\ref{fig:cc-xcc-scc}(b). 
Since $(b,c)\in A_W^+$ and $b$ and $c$ belong to distinct components of $\mathcal{P}_W^+$, the \texttt{SCC} further requires $b$ to be ranked above $c$.

\begin{figure*}[t]
\centering

\begin{subfigure}[t]{0.45\textwidth}
\centering
\begin{tikzpicture}[
    scale=0.9,
    every node/.style={transform shape},
    vertex/.style={circle,draw,minimum size=6mm,inner sep=0pt},
    strict/.style={-{Latex},black},
    tied/.style={{Latex}-{Latex},black},
    component/.style={dashed,rounded corners,black}
]
\node[vertex] (a) at (0,0) {$a$};
\node[vertex] (b) at (3,1.5) {$b$};
\node[vertex] (c) at (3,0) {$c$};
\node[vertex] (d) at (1.5,-1.5) {$d$};
\node[vertex] (e) at (4.5,-1.5) {$e$};
\draw[strict] (a) -- (b);
\draw[strict] (a) -- (c);
\draw[strict] (a) -- (d);
\draw[strict] (a) -- (e);
\draw[strict] (b) -- (e);
\draw[strict] (c) -- (e);
\draw[strict] (d) -- (e);
\draw[strict] (b) -- (c);
\draw[strict] (b) -- (c);
\draw[tied] (d) -- (b);
\draw[tied] (d) -- (c);
\draw[component] (-0.6,-0.6) rectangle (0.6,0.6);
\draw[component] (0.9,-2.1) rectangle (3.6,2.1);
\draw[component] (3.9,-2.1) rectangle (5.1,-0.9);
\end{tikzpicture}
\caption{The majority graph $G_W$.}
\label{fig:cc-xcc-gw}
\end{subfigure}
\begin{subfigure}[t]{0.45\textwidth}
\centering
\begin{tikzpicture}[
    scale=0.9,
    every node/.style={transform shape},
    vertex/.style={circle,draw,minimum size=6mm,inner sep=0pt},
    strict/.style={-{Latex},black},
    tied/.style={{Latex}-{Latex},black},
    component/.style={dashed,rounded corners,black}
]
\node[vertex] (a) at (0,0) {$a$};
\node[vertex] (b) at (3,1.5) {$b$};
\node[vertex] (c) at (3,0) {$c$};
\node[vertex] (d) at (1.5,-1.5) {$d$};
\node[vertex] (e) at (4.5,-1.5) {$e$};
\draw[strict] (a) -- (b);
\draw[strict] (a) -- (c);
\draw[strict] (a) -- (d);
\draw[strict] (a) -- (e);
\draw[strict] (b) -- (e);
\draw[strict] (c) -- (e);
\draw[strict] (d) -- (e);
\draw[strict] (b) -- (c);
\draw[component] (-0.6,-0.6) rectangle (0.6,0.6);
\draw[component] (2.4,0.9) rectangle (3.6,2.1);
\draw[component] (2.4,-0.6) rectangle (3.6,0.6);
\draw[component] (0.9,-2.1) rectangle (2.1,-0.9);
\draw[component] (3.9,-2.1) rectangle (5.1,-0.9);
\end{tikzpicture}
\caption{The strict majority graph $G_W^+$.}
\label{fig:cc-xcc-gwplus}
\end{subfigure}
\caption{Comparison of the \texttt{CC}, \texttt{XCC}, and \texttt{SCC}. 
The dashed regions represent strongly connected components. 
In (a), the \texttt{CC} requires the Condorcet winner $a$ to be ranked first. 
The \texttt{XCC} further requires every candidate in $\{b,c,d\}$ to be ranked higher than $e$.
The opposite arcs between $b$ and $d$ and between $c$ and $d$ represent tied pairwise comparisons and are therefore absent from $G_W^+$ in (b). 
Consequently, $b$ and $c$ belong to different components of $G_W^+$, and the arc $(b,c)\in A_W^+$ implies by the \texttt{SCC} that $b$ must be ranked higher than $c$.}
\label{fig:cc-xcc-scc}
\end{figure*}

\subsection{Extension to rankings with ties}

A ranking with ties is a surjection $\pi:V\to[p]$ for some $p\le n$. 
We write 
\begin{align*}
u\sim_\pi v &\quad \text{if }\pi(u)=\pi(v),\\
u\succ_\pi v &\quad \text{if }\pi(u)<\pi(v),\\
u\succsim_\pi v &\quad \text{if }\pi(u)\le\pi(v). 
\end{align*}
The relation $\succsim_\pi$ defines a weak order. 
For any two rankings with ties $\pi,\rho$, their Kemeny--Snell distance is defined as
\begin{align}\label{KemenySnell}
d_{\mathrm{KS}}(\pi,\rho)
=
\sum_{\{u,v\}\in\mathcal U}
\left|\operatorname{sgn}(\pi(u)-\pi(v))
- \operatorname{sgn}(\rho(u)-\rho(v)) \right|.
\end{align}
Since rankings with ties are allowed, each sign function takes a value in $\{-1,0,1\}$.
Hence, the contribution of each candidate pair is zero if the two rankings specify the same relation, one if one ranking ties the pair and the other ranks it strictly, and two if the two rankings specify opposite strict orders.
Suppose that $\Pi$ is a finite collection of rankings with ties.
We call a ranking with ties $\pi$ that minimizes
\begin{align*}
f_\Pi^{\texttt{KPT}}(\pi)
=
\sum_{\rho \in \Pi}
d_{\mathrm{KS}}(\pi,\rho)
\end{align*}
over all rankings with ties on $V$ a Kemeny ranking with ties, and 
the problem of finding a Kemeny ranking with ties the Kemeny problem with ties (\texttt{KPT}).

When the rankings are strict, the Kemeny--Snell distance is twice the Kendall tau distance.
Hence, the \texttt{KPT} objective is twice the \texttt{KP} objective, and the two problems have the same optimal solutions.
In this sense, the \texttt{KPT} generalizes the \texttt{KP}.
Therefore, when ties are not allowed, the two objective functions have the same minimizers.
In this sense, the \texttt{KPT} generalizes the \texttt{KP}.

The linear ordering problem with ties (\texttt{LOPT}) generalizes the \texttt{KPT}.
For distinct candidates $u,v\in V$, let $w^\succ_{uv}$ denote the real-valued weight assigned when $u$ is ranked above $v$, and $w^\sim_{uv}$ denote the real-valued weight assigned when $u$ and $v$ are tied.
We use the convention 
$w^\sim_{uv}=w^\sim_{vu}$.
Let $W$ denote the collection of all these weights.
Given $W$, the \texttt{LOPT} is to find a ranking with ties $\pi$ that maximizes
\begin{align*}
f^{\texttt{LOPT}}_W(\pi)
\coloneqq
\sum_{\{u,v\}\in\mathcal U}
\left(
w^\succ_{uv}\mathbf{1}[u\succ_\pi v]
+
w^\succ_{vu}\mathbf{1}[v\succ_\pi u]
+
w^\sim_{uv}\mathbf{1}[u\sim_\pi v]
\right).
\end{align*}
We denote this instance by $\texttt{LOPT}(W)$.

We next show that the \texttt{KPT} is a special case of the \texttt{LOPT}.
For each $\{u,v\}\in\mathcal U$, let $p_{uv}$ and $p_{vu}$ denote the numbers of input rankings that rank $u$ above $v$ and $v$ above $u$, respectively.
Let $t_{uv}$ denote the number of input rankings that tie $u$ and $v$.
Define
\begin{align*}
w^\succ_{uv}
&\coloneqq
p_{uv}-p_{vu},\\
w^\succ_{vu}
&\coloneqq
p_{vu}-p_{uv},\\
w^\sim_{uv}
&\coloneqq
t_{uv}.
\end{align*}
To see why these weights represent the \texttt{KPT}, consider the contribution of a pair $\{u,v\}$ to the total Kemeny--Snell distance.
For each pair $\{u,v\}\in\mathcal U$, its contribution to the total Kemeny--Snell distance is
\begin{align*}
2p_{vu}+t_{uv}&\quad\text{if }u\succ_\pi v,\\
p_{uv}+p_{vu}&\quad\text{if }u\sim_\pi v,\\
2p_{uv}+t_{uv}&\quad\text{if }v\succ_\pi u.
\end{align*}
Adding the corresponding contribution to the \texttt{LOPT} objective gives
$p_{uv}+p_{vu}+t_{uv}=|\Pi|$ in each case.
Thus, the sum is $|\Pi|$ for every pair, regardless of the relation selected by $\pi$.
This implies that for every candidate pair, the sum of its contribution to the Kemeny--Snell distance and its contribution to the \texttt{LOPT} objective is independent of the relation selected by $\pi$.
Consequently, for every ranking with ties $\pi$, we have
\begin{align*}
f_\Pi^{\texttt{KPT}}(\pi)+f^\texttt{LOPT}_W(\pi)=|\Pi||\mathcal U|.
\end{align*}
Therefore, a ranking with ties minimizes the total Kemeny--Snell distance if and only if it maximizes the corresponding \texttt{LOPT} objective.
Hence, the optimal solution set of the \texttt{KPT} coincides with that of the corresponding \texttt{LOPT} instance.

We may normalize the three weights associated with each candidate pair without changing the optimal solutions.

\begin{observation}\label{obs:normalization}
For every instance $\texttt{LOPT}(W)$, we may assume without loss of generality that
\begin{align*}
\min\{w^\succ_{uv},w^\succ_{vu},w^\sim_{uv}\}=0
\end{align*}
for every $\{u,v\}\in\mathcal U$.
In particular, all weights are nonnegative.
\end{observation}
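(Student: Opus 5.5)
The plan is to subtract, for each unordered pair $\{u,v\}\in\mathcal U$, the constant
\begin{align*}
c_{uv}\coloneqq\min\{w^\succ_{uv},w^\succ_{vu},w^\sim_{uv}\}
\end{align*}
from all three weights of that pair, and then check that the optimal solution set is unchanged. Concretely, we define a new weight collection $W'$ by
\begin{align*}
{w'}^\succ_{uv}\coloneqq w^\succ_{uv}-c_{uv},\qquad
{w'}^\succ_{vu}\coloneqq w^\succ_{vu}-c_{uv},\qquad
{w'}^\sim_{uv}\coloneqq w^\sim_{uv}-c_{uv}.
\end{align*}
This keeps the convention ${w'}^\sim_{uv}={w'}^\sim_{vu}$, since $c_{uv}$ depends only on the unordered pair.

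The key step is the identity
\begin{align*}
\mathbf{1}[u\succ_\pi v]+\mathbf{1}[v\succ_\pi u]+\mathbf{1}[u\sim_\pi v]=1,
\end{align*}
which holds for every ranking with ties $\pi$ and every pair $\{u,v\}\in\mathcal U$. It holds because exactly one of $\pi(u)<\pi(v)$, $\pi(u)>\pi(v)$, $\pi(u)=\pi(v)$ is true. Consequently, the contribution of $\{u,v\}$ to $f^{\texttt{LOPT}}_{W'}(\pi)$ equals its contribution to $f^{\texttt{LOPT}}_{W}(\pi)$ minus $c_{uv}$. Summing over all pairs gives
\begin{align*}
f^{\texttt{LOPT}}_{W'}(\pi)=f^{\texttt{LOPT}}_{W}(\pi)-\sum_{\{u,v\}\in\mathcal U}c_{uv}.
\end{align*}
The subtracted sum is a constant independent of $\pi$. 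Since the feasible set of rankings with ties on $V$ is the same for both instances, $\texttt{LOPT}(W)$ and $\texttt{LOPT}(W')$ have exactly the same optimal solutions.

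Finally, by the choice of $c_{uv}$, the minimum of the three new weights of each pair is $0$, and hence every new weight is nonnegative. So replacing $W$ by $W'$ loses no generality.

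The only point needing care is that the shift must be uniform across all three relations of a pair; shifting just one of them would change the comparisons between rankings. Beyond this, no real obstacle is expected.
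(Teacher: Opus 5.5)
Your proposal is correct and is the paper's argument: subtracting the per-pair minimum from all three weights changes every ranking's objective by the same constant. Your indicator identity $\mathbf{1}[u\succ_\pi v]+\mathbf{1}[v\succ_\pi u]+\mathbf{1}[u\sim_\pi v]=1$ makes precise the step the paper states only in words. It is also worth noting, though neither you nor the paper says it, that $B_W$, $\mathcal Q_W$, tie-dominance, and the strict-inequality hypothesis of Theorem~\ref{thm:tie} are unchanged by such uniform per-pair shifts; this is what lets the normalization be used freely in the later theorems.
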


Indeed, for each $\{u,v\}\in\mathcal U$, subtracting the minimum of $w^\succ_{uv}$, $w^\succ_{vu}$, and $w^\sim_{uv}$ from all three weights changes the objective value of every ranking with ties by the same constant.
Therefore, this operation does not change the optimal solutions.
We use this normalization throughout the remainder of the paper.

\subsection{An extension of the majority graph}

To state our main results, we first extend the majority graph to associate a directed graph with each \texttt{LOPT} instance. 


\begin{definition}[Majority graph for the \texttt{LOPT}]
For any instance $\texttt{LOPT}(W)$, define the extended majority graph
$H_W=(V,B_W)$ as follows.
For each pair $\{x,y\}\in\mathcal U$, the arcs are included according to the following conditions:
\begin{enumerate}
\item[(C1)]
$(x,y)\in B_W$ if $w^\succ_{xy}>\max\{w^\succ_{yx},w^\sim_{xy}\}$.
\item[(C2)]
$(y,x)\in B_W$ if $w^\succ_{yx}>\max\{w^\succ_{xy},w^\sim_{xy}\}$.
\item[(C3)]
Both $(x,y)$ and $(y,x)$ belong to $B_W$ if $w^\sim_{xy}>\max\{w^\succ_{xy},w^\succ_{yx}\}$.
\item[(C4)]
Both $(x,y)$ and $(y,x)$ belong to $B_W$ if exactly two of $w^\succ_{xy}$, $w^\succ_{yx}$, and $w^\sim_{xy}$ attain the maximum.
\end{enumerate}
Let $\mathcal Q_W$ denote the strongly connected component decomposition of $H_W$.
\end{definition}

The graph $H_W$ contains neither $(x,y)$ nor $(y,x)$ precisely when the three weights associated with $\{x,y\}$ are equal.
By Observation~\ref{obs:normalization}, all three weights are zero in this case.
Therefore, the relation selected for this pair does not affect the objective value.
Note that $H_W$ need not be semicomplete.
Omitting such irrelevant candidate pairs may produce a finer strongly connected component decomposition.

\section{Main results}\label{MainResults}

We now state our main results.
The first result extends the \texttt{SCC} to the \texttt{LOPT}.

\begin{theorem}[Non-strict strong Condorcet criterion]\label{thm:strict}
For any instance $\texttt{LOPT}(W)$, let $u,v\in V$ belong to distinct components in $\mathcal Q_W$.
If $(u,v)\in B_W$, then $\pi(u)<\pi(v)$ for every optimal solution $\pi$ to $\texttt{LOPT}(W)$.
\end{theorem}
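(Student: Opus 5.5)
The plan is to argue by contradiction with an exchange argument on the condensation of $H_W$, lifting an "upward closed" set of candidates to the top of the ranking.

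First I record which of (C1)--(C4) can produce a one-way arc. Conditions (C3) and (C4) always insert both $(x,y)$ and $(y,x)$. So if $(x,y)\in B_W$ but $(y,x)\notin B_W$, the arc came from (C1), and
\begin{align*}
w^\succ_{xy}>\max\{w^\succ_{yx},w^\sim_{xy}\}.
\end{align*}
If neither arc is present, the three weights are equal, and by Observation~\ref{obs:normalization} they are all zero. In the situation of Theorem~\ref{thm:strict}, $u$ and $v$ lie in distinct components of $\mathcal Q_W$ and $(u,v)\in B_W$. Hence $(v,u)\notin B_W$, since otherwise $u$ and $v$ would be strongly connected.

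Next I define the set to be lifted. Let $A\subseteq V$ be the set of vertices from which $u$ is reachable in $H_W$, including $u$ itself. Then $v\notin A$: otherwise $v$ reaches $u$, and $u$ reaches $v$ via the arc $(u,v)$, so they would share a component. The set $A$ is closed under predecessors, so no arc of $H_W$ goes from $V\setminus A$ into $A$. Consequently, for every $x\in A$ and $y\in V\setminus A$, the pair $\{x,y\}$ is of one of two kinds: either only $(x,y)$ is present, so $x\succ y$ is the unique strict maximizer by the first step, or no arc is present, so all three weights vanish. A two-way pair is impossible, because then $y$ would reach $x$ and hence $u$.

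Now take an optimal $\pi$ and suppose $\pi(u)\ge\pi(v)$. I build $\pi'$ as follows. The candidates of $A$ are ranked among themselves exactly as in $\pi$, and those of $V\setminus A$ are ranked among themselves exactly as in $\pi$. Every candidate of $A$ is placed strictly above every candidate of $V\setminus A$, by shifting the levels of $V\setminus A$ below those of $A$ and relabelling to obtain a surjection onto some $[p']$. This is a valid ranking with ties. I then compare $f^{\texttt{LOPT}}_W(\pi')$ with $f^{\texttt{LOPT}}_W(\pi)$ pair by pair:
\begin{enumerate}
\item pairs inside $A$ or inside $V\setminus A$ keep their relation, so their contributions are unchanged;
\item each cross pair now has relation $x\succ_{\pi'} y$, which is either the unique maximizing relation or one among three zero weights, so its contribution does not decrease.
\end{enumerate}
The pair $\{u,v\}$ is a cross pair of the first kind, and in $\pi$ it had $v\succsim_\pi u$, a relation different from $u\succ v$. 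Its contribution therefore strictly increases, so $f^{\texttt{LOPT}}_W(\pi')>f^{\texttt{LOPT}}_W(\pi)$, contradicting the optimality of $\pi$.

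The only delicate point is the case analysis of the first step: I must ensure that every cross pair between $A$ and $V\setminus A$ is either strictly oriented from $A$ outward or completely irrelevant. This is exactly where the omission of all-equal pairs from $H_W$ and the normalization of Observation~\ref{obs:normalization} are used. Everything else is routine bookkeeping, including checking that $\pi'$ is a legitimate surjection.
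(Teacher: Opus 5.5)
Your proof is correct, but it takes a different and more elementary route than the paper.

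The paper never modifies rankings directly. It first proves, in Lemma~\ref{lem:lopt-wucf}, that $\texttt{LOPT}(W)$ is equivalent to the weighted unicycle-free graph problem on $H_W$. Then, in Claim~\ref{lem:intercomponent-arcs}, it takes an optimal feasible arc set $F$ and replaces all of its inter-component arcs at once by those of $B_W$, keeping the intra-component arcs. Feasibility holds because every directed cycle of the new arc set lies inside a single strongly connected component. Optimality is then carried back to rankings through parsimonious extensions.

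You instead stay with weak orders and handle each pair $(u,v)$ with a single cut. The ancestor set of $u$ is predecessor-closed and misses $v$. So every crossing pair is either oriented outward by (C1) or carries three equal weights, and lifting that set as a block strictly improves the objective. This avoids unicycle-freeness, conditions (F1)--(F3) and Observation~\ref{lem:parsimonious-extension} entirely. It uses only reachability rather than the full condensation. Strictly speaking, it does not even need the normalization, since three equal weights already make a pair irrelevant.

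What the paper's heavier setup buys is a reusable framework. Claim~\ref{lem:intercomponent-arcs} is invoked again in the proof of Claim~\ref{claim:symmetric-component}. There the tie property is obtained by adding reverse arcs inside a tie-dominant component and checking that no unicycle is created, rather than by constructing a modified ranking explicitly.
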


When ties are not allowed, the \texttt{LOPT} reduces to the \texttt{LOP}.
In this case, $H_W$ reduces to the strict majority graph for the \texttt{LOP}, and Theorem~\ref{thm:strict} reduces to the \texttt{SCC}.
Our second result concerns ties within strongly connected components.
We call a component $Y\in\mathcal Q_W$ 
tie-dominant 
if
\begin{align*}
w^\sim_{xy}
\geq
\max\{w^\succ_{xy},w^\succ_{yx}\}
\end{align*}
for every $\{x,y\}\in\mathcal U$ with $x,y\in Y$.

\begin{theorem}[Tie property]\label{thm:tie}
For every instance $\texttt{LOPT}(W)$, there exists an optimal solution that ties all candidates within every tie-dominant component in $\mathcal Q_W$.
Moreover, let $u,v\in V$ be distinct candidates that belong to the same tie-dominant component.
If
\begin{align*}
w^\sim_{uv}
>
\max\{w^\succ_{uv},w^\succ_{vu}\},
\end{align*}
then $\pi(u)=\pi(v)$ for every optimal solution $\pi$ to $\texttt{LOPT}(W)$.
\end{theorem}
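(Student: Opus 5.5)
The plan is to show that an optimal value of $\texttt{LOPT}(W)$ splits into independent contributions from the components of $\mathcal Q_W$ and from the pairs between components. Both claims of Theorem~\ref{thm:tie} then follow from the fact that a tie-dominant component has an obvious internally optimal ranking, namely the one tying all of its candidates.

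\emph{Step 1: an upper bound that holds for every ranking.} For any ranking with ties $\pi$, write its objective as
\begin{align*}
f^{\texttt{LOPT}}_W(\pi)=\sum_{Y\in\mathcal Q_W} f_Y(\pi|_Y)+\sum_{\{x,y\}\ \text{cross}} c_{xy}(\pi).
\end{align*}
Here $f_Y(\pi|_Y)$ is the contribution of the pairs inside $Y$. A pair is cross if its endpoints lie in different components, and $c_{xy}(\pi)$ is the weight that $\pi$ collects on that pair. Each $c_{xy}(\pi)$ is at most $M_{xy}\coloneqq\max\{w^\succ_{xy},w^\succ_{yx},w^\sim_{xy}\}$. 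Each $f_Y(\pi|_Y)$ is at most $F_Y\coloneqq\max_\sigma f_Y(\sigma)$, the maximum over rankings with ties $\sigma$ of $Y$. Hence every $\pi$ satisfies $f^{\texttt{LOPT}}_W(\pi)\le\sum_Y F_Y+\sum_{\text{cross}}M_{xy}$.

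\emph{Step 2: the bound is attained.} Take a cross pair $\{x,y\}$. It cannot carry arcs in both directions, since otherwise $x$ and $y$ would lie in the same strongly connected component. Cases (C3) and (C4) would put both arcs into $B_W$, so they are excluded. Therefore either $B_W$ contains exactly one arc, say $(x,y)$, in which case (C1) holds and ranking $x$ strictly above $y$ collects $M_{xy}$; or $B_W$ contains no arc, in which case all three weights are $0$ by Observation~\ref{obs:normalization} and every relation collects $M_{xy}=0$.

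Now fix a topological order of the acyclic condensation of $H_W$. Rank each component by a ranking attaining $F_Y$, and concatenate the components as consecutive blocks of levels in that order. Every cross arc then points from an earlier block to a later one, so every cross pair collects $M_{xy}$. The resulting ranking $\pi^\ast$ attains the bound of Step~1, so it is optimal, and the optimal value equals $\sum_Y F_Y+\sum_{\text{cross}}M_{xy}$.

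\emph{Step 3: tie-dominant components.} If $Y$ is tie-dominant, tying all of $Y$ collects $w^\sim_{xy}=M_{xy}$ on every pair inside $Y$. Thus $F_Y=\sum_{\{x,y\}\subseteq Y}M_{xy}$, and the all-tied ranking of $Y$ attains $F_Y$. Choosing this ranking for every tie-dominant component in the construction of Step~2 gives an optimal solution in which each tie-dominant component is fully tied. This proves the first claim.

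For the second claim, let $\pi$ be any optimal solution. Since $\pi$ attains the bound of Step~1, equality must hold term by term. In particular $f_Y(\pi|_Y)=F_Y=\sum_{\{x,y\}\subseteq Y}M_{xy}$, and each pair inside $Y$ contributes at most $M_{xy}$. Hence every pair in $Y$ must collect exactly $M_{xy}$. If $w^\sim_{uv}>\max\{w^\succ_{uv},w^\succ_{vu}\}$, the only relation collecting $M_{uv}$ is the tie, so $\pi(u)=\pi(v)$.

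The step I expect to need the most care is Step~2. It requires checking that concatenating independently chosen rankings of the components, in topological order, yields a valid ranking with ties. It also requires checking that the only cross pairs not settled by an arc are exactly those whose weights are all zero after normalization. This is where the definition of $H_W$, and in particular the inclusion of both arcs in cases (C3) and (C4), is used. Once this is in place, the rest is termwise bookkeeping.
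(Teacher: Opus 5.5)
Your proof is correct, and it takes a genuinely different route from the paper's.

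\textbf{The paper's route.} The paper works inside the \texttt{WUCF} reformulation on $H_W$. It takes an arbitrary optimal arc set $F$ and adds the symmetric set $\mathcal T$ of $B_W$-arcs inside a tie-dominant component $Y$. It then checks that $F\cup\mathcal T$ is still feasible and no worse. Unicycle-freeness of $F\cup\mathcal T$ needs Claim~\ref{lem:intercomponent-arcs}, which is the \texttt{NSCC} at the arc-set level. Strong connectivity of $Y$ in $H_W$ then forces every parsimonious extension to tie $Y$. A strict-improvement argument gives the second assertion.

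\textbf{Your route.} You bound the objective separably by $\sum_Y F_Y+\sum_{\text{cross}}M_{xy}$. You show the bound is tight by concatenating internally optimal component rankings along a topological order of the condensation. This is essentially the construction in the proof of Corollary~\ref{cor:component-structure}, used as the main tool rather than as a consequence.

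\textbf{What your route buys.} It is more elementary: it needs no unicycle-free graphs or parsimonious extensions, and it does not depend on Theorem~\ref{thm:strict}. It also yields more. Termwise tightness in every optimal $\pi$ forces each cross pair with $(x,y)\in B_W$ to satisfy $\pi(x)<\pi(y)$, because $w^\succ_{xy}$ is then the unique maximum; this is Theorem~\ref{thm:strict} as a by-product. It further shows that $\pi$ is optimal exactly when every $\pi|_Y$ is optimal for its sub-instance and every cross arc is respected. So $\texttt{LOPT}(W)$ decomposes over $\mathcal Q_W$, with an explicit formula for the optimal value, which is the kind of structure the concluding remarks suggest exploiting algorithmically.

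\textbf{What the paper's route buys.} The exchange argument keeps both theorems uniform with the feedback-arc-set approach of Ando et al. It also makes the unicycle-free graph structure of rankings with ties explicit, which is the framework the authors set out to extend.
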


Theorem~\ref{thm:strict} concerns direct arcs between distinct components.
The following corollary also determines comparisons between candidate pairs that need not be joined by an arc of $H_W$.

\begin{corollary}\label{cor:component-structure}
For every instance $\texttt{LOPT}(W)$, there exists an optimal solution $\pi$ to $\texttt{LOPT}(W)$ satisfying the following conditions:
\begin{enumerate}
\item
For every $s,t\in V$ belonging to the same tie-dominant component of $\mathcal{Q}_W$, we have $\pi(s)=\pi(t)$.
\item
For every $s,t\in V$ belonging to distinct components of $\mathcal{Q}_W$, if $H_W$ contains 
a directed path 
from $s$ to $t$, then we have $\pi(s)<\pi(t)$.
\end{enumerate}
\end{corollary}

\begin{proof}
By Theorem~\ref{thm:tie}, there exists an optimal solution $\pi^*$ that ties all candidates within every tie-dominant component.
Fix a topological order of $\mathcal Q_W$. 
Let $\pi$ be a ranking with ties that keeps the relative order given by $\pi^*$ within each component and places every candidate in an earlier component above every candidate in a later component in $\mathcal Q_W$. 
The first condition holds by construction.
The second condition also holds since, for any $s,t\in V$ satisfying its assumptions, the component containing $s$ precedes the component containing $t$ in every topological order, and hence $\pi(s)<\pi(t)$. 

It remains to show that $\pi$ is optimal.
The rankings $\pi$ and $\pi^*$ determine the same relation for every pair within the same component.
Therefore, the contribution of every such pair remains unchanged.
Next, let $x$ and $y$ belong to distinct components, where the component containing $x$ precedes the component containing $y$ in the chosen topological order. 
Suppose that $(x,y)\in B_W$ or $(y,x)\in B_W$.
The topological order implies that $(x,y)\in B_W$.
By Theorem~\ref{thm:strict}, $\pi^*(x)<\pi^*(y)$. 
On the other hand, by construction, $\pi(x)<\pi(y)$.
The contribution of $\{x,y\}$ therefore remains unchanged. 
Finally, suppose that neither $(x,y),(y,x) \notin B_W$.
Then, all three weights associated with $\{x,y\}$ are zero.
Therefore, the relation between $x$ and $y$ does not affect the objective value. 
It follows that $\pi$ is optimal. 
\end{proof}

\begin{table}[t]
\centering
\caption{Weights and arcs in Example~\ref{ex:two-tie-cases}. 
Bold values indicate the maximum weight for each candidate pair.}
\label{tab:two-tie-cases}
\begin{tabular}{cccccl}
\toprule
$x$
& $y$
& $w^\succ_{xy}$
& $w^\succ_{yx}$
& $w^\sim_{xy}$
& Arcs in $B_W$\\
\midrule
$a$ & $b$ & $1$          & $1$ & $\mathbf{3}$ & $(a,b),(b,a)$\\
$a$ & $c$ & $\mathbf{3}$ & $1$ & $1$          & $(a,c)$\\
$a$ & $d$ & $\mathbf{3}$ & $1$ & $1$          & $(a,d)$\\
$b$ & $c$ & $\mathbf{3}$ & $1$ & $1$          & $(b,c)$\\
$b$ & $d$ & $\mathbf{3}$ & $1$ & $1$          & $(b,d)$\\
$c$ & $d$ & $\mathbf{2}$ & $1$ & $\mathbf{2}$ & $(c,d),(d,c)$\\
\bottomrule
\end{tabular}
\end{table}

\begin{figure}[t]
\centering
\begin{tikzpicture}[
    scale=0.9,
    every node/.style={transform shape},
    vertex/.style={
        circle,
        draw,
        minimum size=6mm,
        inner sep=0pt
    },
    strict/.style={
        -{Latex},
        black
    },
    tied/.style={
        {Latex}-{Latex},
        black
    },
    component/.style={
        dashed,
        rounded corners,
        black
    }
]
\node[vertex] (a) at (0,1) {$a$};
\node[vertex] (b) at (0,-1) {$b$};
\node[vertex] (c) at (3,1) {$c$};
\node[vertex] (d) at (3,-1) {$d$};
\draw[tied] (a) -- (b);
\draw[tied] (c) -- (d);
\draw[strict] (a) -- (c);
\draw[strict] (a) -- (d);
\draw[strict] (b) -- (c);
\draw[strict] (b) -- (d);
\draw[component]    (-0.6,-1.6) rectangle (0.6,1.6);
\draw[component]    (2.4,-1.6) rectangle (3.6,1.6);
\end{tikzpicture}
\caption{The extended majority graph $H_W$ for Example~\ref{ex:two-tie-cases}.}
\label{fig:two-tie-cases}
\end{figure}

\begin{example}\label{ex:two-tie-cases}
Let $V=\{a,b,c,d\}$, and assign the weights shown in Table~\ref{tab:two-tie-cases}. 
The resulting extended majority graph is shown in Figure~\ref{fig:two-tie-cases}. 
The set of the strongly connected components is 
\begin{align*}
\mathcal Q_W = \{V_1,V_2\} = \{\{a,b\}, \{c,d\}\}.
\end{align*}

We first apply Theorem~\ref{thm:strict}. 
For every $x\in\{a,b\}$ and $y\in\{c,d\}$, we have $(x,y)\in B_W$ and $(y,x)\notin B_W$.
Theorem~\ref{thm:strict} therefore implies that, for every optimal solution $\pi$, we have $\pi(x)<\pi(y)$ for every $x\in\{a,b\}$ and $y\in\{c,d\}$. 
This is the ordering between components determined by the \texttt{NSCC}.
The existing \texttt{SCC} gives the corresponding conclusion for strict rankings but does not describe relations within the components when ties are allowed.

We next apply Theorem~\ref{thm:tie}.
Both $\{a,b\}$ and $\{c,d\}$ are tie-dominant.
The first assertion of Theorem~\ref{thm:tie} therefore guarantees an optimal solution that ties the candidates within both components simultaneously.
For the pair $\{a,b\}$, the tie weight is the unique maximum.
The second assertion of Theorem~\ref{thm:tie} therefore implies that every optimal solution $\pi$ satisfies $\pi(a)=\pi(b)$.
For the pair $\{c,d\}$, the tie weight shares the maximum value with the weight for $c\succ d$.
Therefore, Theorem~\ref{thm:tie} does not require $c$ and $d$ to be tied in every optimal solution.
Indeed, both
\begin{align*}
a\sim b\succ c\sim d
\quad\text{and}\quad
a\sim b\succ c\succ d
\end{align*}
are optimal, and each attains the objective value $17$.
\end{example}

\section{Graph formulation}\label{GraphReformulation}

Ando et al.~\cite{Andoetal2022} reduced the \texttt{LOP} to the weighted feedback arc set problem to prove the \texttt{SCC}.
We also use a graph-theoretic approach for the \texttt{LOPT}.

\subsection{Graph representation of rankings with ties}

It is well known that every acyclic directed graph admits a linear extension.
This fact connects acyclic directed graphs with strict rankings.
When ties are allowed, opposite arcs represent ties, and hence the graph need not be acyclic.
We therefore use the following weaker condition. 

\begin{definition}[Unicycle-free directed graph]
Let $G=(V,A)$ be a directed graph.
A directed cycle 
\begin{align*}
C=(v_1,v_2,\ldots,v_k,v_1)
\end{align*}
has distinct vertices $v_1,v_2,\ldots,v_k$.
We say that the cycle $C$ is a unicycle if its reverse $(v_1,v_k,\ldots,v_2,v_1)$ is not contained in $G$.
The graph $G$ is unicycle-free if it contains no unicycle.
\end{definition}

\begin{definition}[Parsimonious extension]\label{def:parsimonious-extension}
Let $G=(V,A)$ be a directed graph.
A ranking with ties $\pi$ is a linear extension of $G$ if $\pi(x)\leq\pi(y)$ for every $(x,y)\in A$.
A linear extension $\pi$ is parsimonious if $\pi(x)<\pi(y)$ whenever $(x,y)\in A$ and $(y,x)\notin A$.
\end{definition}

\begin{observation}\label{lem:parsimonious-extension}
A directed graph admits a parsimonious extension if and only if it is unicycle-free.
\end{observation}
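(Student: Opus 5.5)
The plan is to prove the two directions separately.

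\emph{Necessity.} Suppose $\pi$ is a parsimonious extension of $G$, and suppose for contradiction that $G$ contains a unicycle $C=(v_1,\ldots,v_k,v_1)$. Since $\pi$ is a linear extension, we have $\pi(v_1)\le\pi(v_2)\le\cdots\le\pi(v_k)\le\pi(v_1)$, so all of these values coincide. Because the reverse cycle is not contained in $G$, some arc $(v_i,v_{i+1})$ of $C$ (indices modulo $k$) has its reverse $(v_{i+1},v_i)\notin A$. Parsimony then forces $\pi(v_i)<\pi(v_{i+1})$, which is a contradiction. Note the case $k=2$: a $2$-cycle $(v_1,v_2,v_1)$ is its own reverse, so it is never a unicycle, and the argument is consistent with this.

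\emph{Sufficiency.} Assume $G$ is unicycle-free. Take the strongly connected component decomposition of $G$ and fix a topological order of the condensation. Define $\pi$ by giving every vertex in the $i$-th component (in that order) the value $i$; this is a surjection onto $[p]$. Every arc between distinct components goes forward in the topological order, so it satisfies $\pi(x)<\pi(y)$. Every arc inside a component satisfies $\pi(x)=\pi(y)$. Hence $\pi$ is a linear extension. For parsimony, it suffices to show that every arc $(x,y)$ inside a strongly connected component has its reverse $(y,x)\in A$; arcs between components already receive strict inequalities.

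\emph{Key step (main obstacle).} Let $(x,y)\in A$ with $x,y$ in the same component, and suppose $(y,x)\notin A$. Choose a shortest directed path $P$ from $y$ to $x$; its length is at least $2$, since $(y,x)\notin A$. Then $(x,y)$ followed by $P$ is a directed cycle $C$ with distinct vertices, because $P$ is shortest and hence simple, and $x\ne y$. Since $G$ is unicycle-free, the reverse of $C$ lies in $G$. The reverse of $C$ traverses the arc $(y,x)$, since in $C$ the vertex $y$ immediately follows $x$. This contradicts $(y,x)\notin A$. Therefore every intra-component arc is bidirected, and $\pi$ is parsimonious.

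The only delicate points are ensuring the constructed cycle is simple and correctly identifying which arc of the reversed cycle yields the contradiction; both are handled by taking a shortest path.
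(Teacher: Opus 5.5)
Your proof is correct and follows essentially the same route as the paper. The necessity direction is identical, and for sufficiency you use the same ranking by strongly connected components in topological order and the same cycle argument to rule out a one-directional arc inside a component; taking a shortest path just makes explicit the simplicity of the cycle, which the paper asserts without spelling it out.
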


\begin{figure}[t]
\centering
\begin{subfigure}[t]{0.3\textwidth}
\centering
\begin{tikzpicture}[
    scale=0.9,
    every node/.style={transform shape},
    vertex/.style={circle,draw,minimum size=6mm,inner sep=0pt},
    strict/.style={-{Latex},black},
    tied/.style={{Latex}-{Latex},black},
    component/.style={dashed,rounded corners,black}
]
\node[vertex] (a) at (0,2) {$a$};
\node[vertex] (b) at (0,0) {$b$};
\node[vertex] (c) at (2,0) {$c$};
\draw[strict] (a) -- (b);
\draw[strict] (a) -- (c);
\draw[tied] (b) -- (c);
\end{tikzpicture}
\caption{Unicycle-free}
\label{fig:unicycle-a}
\end{subfigure}
\begin{subfigure}[t]{0.3\textwidth}
\centering
\begin{tikzpicture}[
    scale=0.9,
    every node/.style={transform shape},
    vertex/.style={circle,draw,minimum size=6mm,inner sep=0pt},
    strict/.style={-{Latex},black},
    tied/.style={{Latex}-{Latex},black},
    component/.style={dashed,rounded corners,black}
]
\node[vertex] (a) at (0,2) {$a$};
\node[vertex] (b) at (0,0) {$b$};
\node[vertex] (c) at (2,0) {$c$};
\draw[tied] (a) -- (b);
\draw[tied] (a) -- (c);
\draw[tied] (b) -- (c);
\end{tikzpicture}
\caption{Unicycle-free}
\label{fig:unicycle-b}
\end{subfigure}
\begin{subfigure}[t]{0.3\textwidth}
\centering
\begin{tikzpicture}[
    scale=0.9,
    every node/.style={transform shape},
    vertex/.style={circle,draw,minimum size=6mm,inner sep=0pt},
    strict/.style={-{Latex},black},
    tied/.style={{Latex}-{Latex},black},
    component/.style={dashed,rounded corners,black}
]
\node[vertex] (a) at (0,2) {$a$};
\node[vertex] (b) at (0,0) {$b$};
\node[vertex] (c) at (2,0) {$c$};
\draw[tied] (a) -- (b);
\draw[strict] (a) -- (c);
\draw[tied] (b) -- (c);
\end{tikzpicture}
\caption{Not unicycle-free}
\label{fig:unicycle-c}
\end{subfigure}
\caption{Examples of parsimonious extensions.
The graphs in (a) and (b) admit the parsimonious extensions $a\succ b\sim c$ and $a\sim b\sim c$, respectively.
The graph in (c) contains the unicycle $(a,c,b,a)$ and therefore admits no parsimonious extension.}
\label{fig:unicycle-examples}
\end{figure}

For completeness, the proof is given in \ref{app:graph-representation}.
Note that our use of the term unicycle differs from the standard notion of a unicyclic graph.
Observation~\ref{lem:parsimonious-extension} allows us to obtain rankings with ties from directed graphs that are not semicomplete.

\subsection{An equivalent graph optimization problem}

We now formulate the \texttt{LOPT} as a graph optimization problem.
Let $G=(V,A)$ be a directed graph.
Let $F$ be an arc set which does not need to be a subset of $A$.
Define 
\begin{align*}
\operatorname{add}(F)=F\setminus A
\end{align*}
The set $\operatorname{add}(F)$ contains the arcs added to obtain $F$ from $A$.

\begin{definition}[Feasible arc set]\label{def:feasible-arc-set}
An arc set $F$ is feasible for $G$ if the following conditions hold:
\begin{enumerate}
\item[(F1)]
$(V,F)$ is unicycle-free.
\item[(F2)]
For every $(x,y)\in\operatorname{add}(F)$, we have $(y,x)\in A$.
\item[(F3)]
For every pair of distinct vertices $x,y\in V$, if $(x,y)\in A$ or $(y,x)\in A$, then $(x,y)\in F$ or $(y,x)\in F$.
\end{enumerate}
\end{definition}

Let $F$ be a feasible arc set.
For every $\{x,y\}\in\mathcal U$, we write
\begin{align*}
x\succ_F y
&\quad\text{if }(x,y)\in F\text{ and }(y,x)\notin F,\\
x\sim_F y
&\quad\text{if }(x,y),(y,x)\in F, 
\end{align*}
and given weights $W$, define
\begin{align*}
g_W(F)
=
\sum_{\{x,y\}\in\mathcal U}
\left(
w^\succ_{xy}\mathbf{1}[x\succ_F y]
+
w^\succ_{yx}\mathbf{1}[y\succ_F x]
+
w^\sim_{xy}\mathbf{1}[x\sim_F y]
\right).
\end{align*}


\begin{definition}[Weighted unicycle-free graph problem]
Given a directed graph $G=(V,A)$ and weights $W$, the weighted unicycle-free graph problem (\texttt{WUCF}) seeks a feasible arc set $F$ that maximizes $g_W(F)$.
We denote this instance by $\texttt{WUCF}(G,W)$.
\end{definition}

For an instance $\texttt{LOPT}(W)$, we focus on the extended majority graph $H_W$ and consider the corresponding instance $\texttt{WUCF}(H_W,W)$.
We associate solutions of the two problems in both directions.

We first associate an arc set $F_\pi$ with each ranking with ties $\pi$.
It follows from the definition of $B_W$ and Observation~\ref{obs:normalization} that we have $(x,y)\in B_W$ or $(y,x) \in B_W$ if and only if at least one of $w^\succ_{xy}$, $w^\succ_{yx}$, and $w^\sim_{xy}$ is positive.
Thus, only candidate pairs satisfying this condition can make a nonzero contribution to the \texttt{LOPT} objective.
We therefore define 
\begin{align*}
F_\pi
=
\{(x,y):
\{x,y\}\in\mathcal U,\ 
\pi(x)\leq\pi(y),\ 
(x,y)\in B_W\text{ or }(y,x)\in B_W
\}.
\end{align*}

Conversely, let $F$ be a feasible arc set for $H_W$.
By \textnormal{(F1)} and Observation~\ref{lem:parsimonious-extension}, $(V,F)$ admits a parsimonious extension.
We associate with $F$ any parsimonious extension of $(V,F)$ and denote it by $\pi_F$.
The following lemma establishes the relation between the optimal solutions of the two problems.

\begin{lemma}\label{lem:lopt-wucf}
For every ranking with ties $\pi$, the arc set $F_\pi$ is feasible for $H_W$.
Conversely, for every feasible arc set $F$ for $H_W$, the graph $(V,F)$ admits a parsimonious extension.
Let $\pi_F$ be any such extension.
Then, we have
\begin{align*}
f_W^{\texttt{LOPT}}(\pi)
=
g_W(F_\pi) \quad \text{and} \quad
g_W(F)
=
f_W^{\texttt{LOPT}}(\pi_F).
\end{align*}
Consequently, $\pi$ is optimal for $\texttt{LOPT}(W)$ if and only if $F_\pi$ is optimal for $\texttt{WUCF}(H_W,W)$.
Moreover, if $F$ is optimal for $\texttt{WUCF}(H_W,W)$, then every parsimonious extension $\pi_F$ of $(V,F)$ is optimal for $\texttt{LOPT}(W)$.
\end{lemma}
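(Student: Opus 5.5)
The proof will check each claimed property of the two maps $\pi\mapsto F_\pi$ and $F\mapsto\pi_F$ directly from the definitions, one pair $\{x,y\}$ at a time. It then obtains the optimality statements by a sandwich argument.

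\emph{Feasibility of $F_\pi$.}
Write $A=B_W$.

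For (F3): if $(x,y)\in A$ or $(y,x)\in A$, then the pair $\{x,y\}$ is relevant. Either $\pi(x)\le\pi(y)$ or $\pi(y)\le\pi(x)$, so at least one of $(x,y),(y,x)$ lies in $F_\pi$.

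For (F2): take $(x,y)\in F_\pi\setminus A$. By definition of $F_\pi$, $(x,y)\in A$ or $(y,x)\in A$. Since $(x,y)\notin A$, we get $(y,x)\in A$.

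For (F1): $\pi$ is itself a parsimonious extension of $(V,F_\pi)$. Indeed, every arc $(x,y)\in F_\pi$ satisfies $\pi(x)\le\pi(y)$. If $(x,y)\in F_\pi$ but $(y,x)\notin F_\pi$, then $\pi(y)\le\pi(x)$ fails, because the pair is relevant and $\pi(y)\le\pi(x)$ would have put $(y,x)$ into $F_\pi$; hence $\pi(x)<\pi(y)$. Observation~\ref{lem:parsimonious-extension} then gives unicycle-freeness.

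The second assertion of the lemma, that $(V,F)$ admits a parsimonious extension for feasible $F$, is immediate from (F1) and Observation~\ref{lem:parsimonious-extension}.

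\emph{Objective identities.}
For $f^{\texttt{LOPT}}_W(\pi)=g_W(F_\pi)$, compare the two sums pair by pair.

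Irrelevant pairs contribute $0$ to both. By Observation~\ref{obs:normalization} all three of their weights are $0$, and no arc of $F_\pi$ lies on them.

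For a relevant pair, the three cases $\pi(x)<\pi(y)$, $\pi(x)=\pi(y)$ and $\pi(x)>\pi(y)$ translate exactly into $x\succ_{F_\pi}y$, $x\sim_{F_\pi}y$ and $y\succ_{F_\pi}x$, so the contributions agree.

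For $g_W(F)=f^{\texttt{LOPT}}_W(\pi_F)$, the pair-by-pair comparison needs more care, and I expect it to be the main subtlety. Let $\pi=\pi_F$ be a parsimonious extension of $(V,F)$.

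For a relevant pair, (F3) guarantees that $F$ contains at least one of the two arcs. If $x\succ_F y$, parsimony gives $\pi(x)<\pi(y)$. If $x\sim_F y$, the extension property gives $\pi(x)\le\pi(y)\le\pi(x)$, so the pair is tied. So the relations match.

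For an irrelevant pair, $F$ may contain arcs or not, and $\pi$ may order the pair arbitrarily. Both contributions are still $0$, because every weight on the pair is zero. Hence the equality holds. Note that this argument uses only that $F$ meets every relevant pair.

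\emph{Optimality.}
Let $F^*$ be optimal for \texttt{WUCF}, and let $\pi$ be any ranking with ties.

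First, every feasible $F$ satisfies $g_W(F)=f(\pi_F)\le\mathrm{OPT}_{\texttt{LOPT}}$, so $\mathrm{OPT}_{\texttt{WUCF}}\le\mathrm{OPT}_{\texttt{LOPT}}$. Second, $f(\pi)=g_W(F_\pi)\le g_W(F^*)$ for every $\pi$, so $\mathrm{OPT}_{\texttt{LOPT}}\le\mathrm{OPT}_{\texttt{WUCF}}$. The two optimal values are therefore equal.

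Consequently, $\pi$ is optimal if and only if $f(\pi)=g_W(F_\pi)$ attains this common value, that is, if and only if $F_\pi$ is optimal. Likewise, for optimal $F$, every parsimonious extension $\pi_F$ has $f(\pi_F)=g_W(F)$ equal to the optimum, so $\pi_F$ is optimal for $\texttt{LOPT}(W)$.
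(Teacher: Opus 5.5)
Your proposal is correct and follows the paper's proof essentially step for step. It shows feasibility of $F_\pi$ by noting that $\pi$ is a parsimonious extension of $(V,F_\pi)$, matches relations pair by pair on relevant pairs while irrelevant pairs contribute zero by Observation~\ref{obs:normalization}, and obtains equal optimal values from the two objective identities. You also spell out two steps the paper only asserts: the check that $\pi$ is parsimonious for $(V,F_\pi)$, and the sandwich argument for the optimal values. One minor point is that (F2) actually forbids $F$ from containing any arc on an irrelevant pair, although, as you note, your argument does not need this.
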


\begin{proof}
Let $\pi$ be a ranking with ties.
The ranking $\pi$ is a parsimonious extension of $(V,F_\pi)$.
Observation~\ref{lem:parsimonious-extension} implies that $(V,F_\pi)$ is unicycle-free.
Suppose that $(x,y)\in\operatorname{add}(F_\pi)$.
Then, $(x,y)\notin B_W$.
By the definition of $F_\pi$, we have $(y,x)\in B_W$.
Suppose that $(x,y)\in B_W$ or $(y,x)\in B_W$.
Since $\pi$ is complete, we have
\begin{align*}
\pi(x)\leq\pi(y) \quad \text{or} \quad \pi(y)\leq\pi(x).
\end{align*}
Therefore, $(x,y)\in F_\pi$ or $(y,x)\in F_\pi$.
Hence, $F_\pi$ is feasible for $H_W$.

Consider a pair $\{x,y\}\in\mathcal U$ such that $(x,y)\in B_W$ or $(y,x)\in B_W$.
By the definition of $F_\pi$, the ranking $\pi$ and the arc set $F_\pi$ determine the same relation for this pair.
Hence, this pair has the same contribution to the two objectives.
Next, consider a pair $\{x,y\}\in\mathcal U$ such that $(x,y),(y,x) \notin B_W$.
By the definition of $B_W$ and Observation~\ref{obs:normalization}, all three weights associated with this pair are zero.
Hence, this pair also has the same contribution, namely zero, to the two objectives. Therefore, 
\begin{align*}
f_W^{\texttt{LOPT}}(\pi)
=
g_W(F_\pi).
\end{align*}

Conversely, let $F$ be feasible for $H_W$.
By \textnormal{(F1)}, $(V,F)$ is unicycle-free.
Observation~\ref{lem:parsimonious-extension} implies that $(V,F)$ admits a parsimonious extension.
Let $\pi_F$ be any such extension. 
Then, $\pi_F$ is feasible for $\texttt{LOPT}(W)$.

Consider a pair $\{x,y\}\in\mathcal U$ such that $(x,y)\in B_W$ or $(y,x)\in B_W$.
Condition \textnormal{(F3)} implies that $(x,y)\in F$ or $(y,x) \in F$.
By the linear-extension and parsimonious-extension conditions, we have
\begin{align*}
(x,y)\text{ and }(y,x)\in F
&\quad\Rightarrow\quad
\pi_F(x)=\pi_F(y),\\
(x,y)\in F\text{ and }(y,x)\notin F
&\quad\Rightarrow\quad
\pi_F(x)<\pi_F(y),\\
(x,y)\notin F\text{ and }(y,x)\in F
&\quad\Rightarrow\quad
\pi_F(y)<\pi_F(x).
\end{align*}
Therefore, $F$ and $\pi_F$ determine the same relation for every such pair.
Next, consider a pair $\{x,y\}\in\mathcal U$ such that $(x,y),(y,x)\notin B_W$.
As before, by the definition of $B_W$ and Observation~\ref{obs:normalization}, all three weights associated with this pair are zero.
Hence, this pair contributes zero to both objective functions.
It follows that
\begin{align*}
g_W(F)=f_W^{\texttt{LOPT}}(\pi_F).
\end{align*}

Every feasible solution of either problem is associated with a feasible solution of the other problem having the same objective value.
Therefore, the two problems have the same optimal objective value.
The desired result follows.
\end{proof}

\section{Proofs of the main results}\label{ProofsMainResults}

We prove the main theorems by modifying optimal feasible arc sets between and within strongly connected components of $H_W$.

\subsection{Proof of Theorem~\ref{thm:strict}}


To prove Theorem~\ref{thm:strict}, it suffices to establish the following claim.

\begin{claim}\label{lem:intercomponent-arcs}
Let $F$ be an optimal solution to $\texttt{WUCF}(H_W,W)$.
If $(x,y)\in B_W$ and $x$ and $y$ belong to distinct strongly connected components of $\mathcal{Q}_W$, then
\begin{align*}
(x,y)\in F
\quad\text{and}\quad
(y,x)\notin F
\end{align*}
\end{claim}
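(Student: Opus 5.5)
Let $F$ be an optimal arc set and set $\pi=\pi_F$, a parsimonious extension of $(V,F)$. By Lemma~\ref{lem:lopt-wucf}, $\pi$ is optimal for $\texttt{LOPT}(W)$, and $F$ and $\pi$ induce the same relation on every pair joined by an arc of $H_W$. So it suffices to show $\pi(x)<\pi(y)$ for every arc $(x,y)\in B_W$ between distinct components. Equivalently, we show that every optimal ranking $\pi$ satisfies the conclusion of Theorem~\ref{thm:strict}, and then transfer this back to $F$ through $F=F_{\pi}$ on relevant pairs.

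The plan is an exchange argument on the condensation of $H_W$. Fix a topological order $Q_1,\dots,Q_m$ of $\mathcal Q_W$. Since $(x,y)\in B_W$ joins distinct components, $(y,x)\notin B_W$; otherwise $x$ and $y$ would be strongly connected. Hence only (C1) applies, so $w^\succ_{xy}>\max\{w^\succ_{yx},w^\sim_{xy}\}$. Every pair across components that carries an arc in $H_W$ therefore has a strict unique maximum in the forward direction, and every pair with no arc has all weights zero.

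Build $\pi'$ from $\pi$ by keeping the relative order (including ties) of $\pi$ within each component $Q_i$, and placing all of $Q_i$ strictly above all of $Q_j$ for $i<j$. Concretely, $\pi'$ sorts lexicographically by (component index, $\pi$-value). Compare the two objectives pair by pair:
\begin{itemize}
\item Intra-component pairs contribute identically under $\pi$ and $\pi'$.
\item Inter-component pairs without arcs contribute zero under both.
\item Each inter-component pair with an arc $(s,t)\in B_W$, $s\in Q_i$, $t\in Q_j$, $i<j$, receives $w^\succ_{st}$ under $\pi'$, which is the maximum of its three weights. Under $\pi$ it receives at most that.
\end{itemize}
Hence $f(\pi')\ge f(\pi)$, with strict inequality if some arc pair $(s,t)$ has $\pi(s)\ge\pi(t)$, since the (C1) maximum is strict. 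By optimality of $\pi$, no such pair exists. In particular $\pi(x)<\pi(y)$, and since $F$ induces the same relation on $\{x,y\}$, we get $(x,y)\in F$ and $(y,x)\notin F$.

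The main obstacle is justifying that the (C1) case is the only one for arcs between distinct components, namely that the arc conditions (C2)–(C4) would force both arcs and hence a common component. This follows immediately from the definitions. A second point needing care is that $\pi'$ is a genuine ranking with ties: the lexicographic construction yields a surjection onto some $[p]$ after compressing values. Alternatively, one could argue directly in $\texttt{WUCF}$ by replacing $F$ with $F'=\bigl(F\cap\bigcup_i Q_i\times Q_i\bigr)\cup\{\text{inter-component arcs of }B_W\}$ and checking (F1)–(F3). The unicycle-freeness of $F'$ follows from the topological order, because any cycle stays within one component, where $F'$ agrees with $F$. I would present whichever version is cleaner, likely the ranking version via Lemma~\ref{lem:lopt-wucf}.
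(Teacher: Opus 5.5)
Your argument is correct and is essentially the paper's. The paper runs the same exchange directly in \texttt{WUCF}: it replaces $F$ by $F'=(F\setminus\mathcal D)\cup\mathcal C$, keeping intra-component arcs and taking inter-component arcs from $B_W$, verifies (F1)--(F3), and then makes the same pair-by-pair comparison using the strict (C1) maximum, which is exactly the alternative you mention at the end. Your ranking-side version is the construction the paper later uses for Corollary~\ref{cor:component-structure}; it avoids the feasibility check because $\pi'$ is automatically a ranking with ties, but it has to go through Lemma~\ref{lem:lopt-wucf} twice, once to get optimality of $\pi_F$ and once (via (F3) and parsimony) to turn $\pi_F(x)<\pi_F(y)$ back into $(x,y)\in F$, $(y,x)\notin F$.
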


Indeed, Claim~\ref{lem:intercomponent-arcs} implies Theorem~\ref{thm:strict} as follows. 
Let $u$ and $v$ belong to distinct components of $H_W$, and suppose that $(u,v)\in B_W$.
Let $\pi$ be any optimal solution to $\texttt{LOPT}(W)$.
Then, by Lemma~\ref{lem:lopt-wucf}, $F_\pi$ is an optimal solution to $\texttt{WUCF}(H_W,W)$.
Claim~\ref{lem:intercomponent-arcs} gives $(u,v)\in F_\pi$ and $(v,u)\notin F_\pi$.
Therefore, $u\succ_\pi v$.

\begin{proof}
Let $F$ be an optimal solution to $\texttt{WUCF}(H_W,W)$.
Let $\mathcal{C}$ and $\mathcal{D}$ denote the sets of arcs in $B_W$ and $F$, respectively, whose endpoints belong to distinct components in $\mathcal Q_W$.
Define
$
F'
=
(F\setminus\mathcal D)\cup\mathcal C$.
Thus, $F'$ and $F$ contain the same arcs between candidates in the same component.
For candidates in distinct components, $F'$ contains exactly the arcs of $B_W$. 
We show that $F'$ is feasible for $H_W$. 

We first show that every directed cycle of $(V,F')$ is contained in a single component in $\mathcal Q_W$.
Suppose to the contrary that there exists a directed cycle $C$ that visits distinct components. 
Then, every arc of $C$ between distinct components belongs to $B_W$ by the definition of $F'$. 
Therefore, the components visited by $C$ must be mutually reachable in $H_W$, a contradiction. 
On the other hand, $F'$ and $F$ have the same arcs within each component in $\mathcal Q_W$.
Therefore, $(V,F')$ is unicycle-free and satisfies \textnormal{(F1)}.

It remains to show that \textnormal{(F2)} and \textnormal{(F3)} hold for $F'$. 
By the construction of $F'$, every arc in $F'\setminus B_W$ also belongs to $F\setminus B_W$; that is,
\begin{align*}
F'\setminus B_W
\subseteq
F\setminus B_W.
\end{align*}
Therefore, since $F$ satisfies \textnormal{(F2)}, $F'$ also satisfies  \textnormal{(F2)}.
To see that \textnormal{(F3)} also holds, let $x$ and $y$ be distinct vertices such that $(x,y)\in B_W$ or $(y,x)\in B_W$. 
If $x$ and $y$ belong to the same component in $\mathcal Q_W$, then $F'$ and $F$ contain the same arcs between them. 
Thus, \textnormal{(F3)} for $F$ implies that $(x,y)\in F'$ or $(y,x)\in F'$. 
If $x$ and $y$ belong to distinct components, then every arc of $B_W$ between them is included in $F'$. 
Therefore, we have $(x,y)\in F'$ or $(y,x)\in F'$. 
Thus, $F'$ satisfies \textnormal{(F3)}. 

Next, we compare the objective values of $F$ and $F'$.
Consider any pair $\{x,y\}\in\mathcal U$.
Suppose first that $x$ and $y$ belong to the same component in $\mathcal Q_W$.
By the definition of $F'$, the arc sets $F$ and $F'$ determine the same relation for this pair.
Therefore, its contribution does not change.
Next, suppose that $x$ and $y$ belong to distinct components in $\mathcal Q_W$. 

\paragraph{Case 1}
Suppose that 
$(x,y),(y,x)\notin B_W$.
By the definition of $B_W$ and Observation~\ref{obs:normalization}, all three weights associated with this pair are zero.
Therefore, the pair contributes zero under both $F$ and $F'$.

\paragraph{Case 2}
Suppose that 
$(x,y)\in B_W$ or $(y,x)\in B_W$. 
Since $x$ and $y$ belong to distinct components, the two arcs cannot both belong to $B_W$. 
Thus, after exchanging $x$ and $y$ if necessary, we can assume that $(x,y)\in B_W$ and $(y,x)\notin B_W$.
Then, by the definition of $F'$, we have $(x,y)\in F'$ and $(y,x)\notin F'$.
Hence, $x\succ_{F'}y$.
On the other hand, the definition of $B_W$ implies that
\begin{align*}
w^\succ_{xy}
>
\max\{w^\succ_{yx},w^\sim_{xy}\}.
\end{align*}
Therefore, the contribution of $\{x,y\}$ under $F'$ is at least its contribution under $F$.
The inequality is strict unless $x\succ_F y$.
Thus, no pair has a smaller contribution under $F'$ than under $F$.
It follows that
\begin{align*}
g_W(F')\geq g_W(F).
\end{align*}

Suppose that the conclusion of the claim fails.
Then, there exist vertices $x$ and $y$ belonging to distinct components in $\mathcal Q_W$ such that $(x,y)\in B_W$ but $x\succ_F y$ does not hold.
By Case~2, the contribution of this pair $\{x,y\}$ strictly increases from $F$ to $F'$, whereas the contribution of every other pair does not decrease.
This contradicts the optimality of $F$ for $\texttt{WUCF}(H_W,W)$.
Therefore, the claim follows.
\end{proof}

\subsection{Proof of Theorem~\ref{thm:tie}}

To prove Theorem~\ref{thm:tie}, it suffices to establish the following claim.

\begin{claim}\label{claim:symmetric-component}
There exists an optimal solution $F$ to $\texttt{WUCF}(H_W,W)$ such that any parsimonious extension $\pi_F$ of $(V,F)$ ties all candidates within every tie-dominant component. 
Moreover, for any distinct candidates $u,v\in V$ belonging to the same tie-dominant component, if
\begin{align*}
w^\sim_{uv}
>
\max\{w^\succ_{uv},w^\succ_{vu}\},
\end{align*}
then every optimal solution  to $\texttt{WUCF}(H_W,W)$ contains both $(u,v)$ and $(v,u)$.
\end{claim}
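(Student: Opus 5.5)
Both parts of Claim~\ref{claim:symmetric-component} will be obtained by the same exchange argument used for Claim~\ref{lem:intercomponent-arcs}: we modify an optimal feasible arc set $F$ only inside the tie-dominant components. Let $\mathcal T\subseteq\mathcal Q_W$ denote the set of tie-dominant components. Let $F$ be any optimal solution to $\texttt{WUCF}(H_W,W)$; one exists because the feasible set is finite and nonempty by Lemma~\ref{lem:lopt-wucf}. Define
\begin{align*}
F'=\Bigl(F\setminus\bigcup_{Y\in\mathcal T}(Y\times Y)\Bigr)\cup\bigcup_{Y\in\mathcal T}\{(x,y):x,y\in Y,\ x\neq y\}.
\end{align*}
Thus $F'$ makes every tie-dominant component a complete symmetric digraph and agrees with $F$ on all other pairs.

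\textbf{Feasibility of $F'$.} For (F1), take a directed cycle $C$ of $(V,F')$. If $C$ uses an arc between distinct components, it must be an arc of $F$ there. By Claim~\ref{lem:intercomponent-arcs}, such arcs are exactly the arcs of $B_W$ between components, and these are one-directional and respect the acyclic condensation of $H_W$. Hence $C$ cannot leave a component and return, so $C$ lies within a single component $Y$. If $Y\in\mathcal T$, then the reverse of $C$ is also in $F'$, so $C$ is not a unicycle. Otherwise $C\subseteq F$, and $F$ is unicycle-free with the same arcs inside $Y$. For (F2), any added arc inside some $Y\in\mathcal T$ is $(x,y)$ with $(x,y)\notin B_W$. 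If $(y,x)\notin B_W$ as well, (F2) fails, so we must treat pairs with all three weights zero. For such pairs, I will instead keep $F$'s arcs unchanged, or choose whichever orientation is consistent. The cleaner fix is to add only $(x,y),(y,x)$ for pairs of $Y$ joined in $H_W$, and then verify (F1) again. I expect this step to be the main obstacle. A cycle inside $Y$ may mix symmetric arcs with arcs of $F$ on zero-weight pairs, and such a cycle could be a unicycle. To handle it, I will also delete from $F'$ all arcs inside $Y$ on pairs absent from $H_W$; (F3) does not require them. Then every arc inside $Y$ is symmetric, so (F1) holds. Condition (F3) holds because every pair joined in $H_W$ retains at least one arc.

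\textbf{Objective and parsimonious extension.} For a pair $\{x,y\}$ inside $Y\in\mathcal T$ joined in $H_W$, $F'$ gives $x\sim_{F'}y$. Tie-dominance gives $w^\sim_{xy}\ge\max\{w^\succ_{xy},w^\succ_{yx}\}$, so the contribution does not decrease. Pairs not joined in $H_W$ contribute zero under both $F$ and $F'$, and all other pairs are unchanged. Hence $g_W(F')\ge g_W(F)$, and $F'$ is optimal. Now let $\pi_{F'}$ be any parsimonious extension. Since $Y$ is strongly connected in $H_W$ and every $H_W$-arc inside $Y$ appears in $F'$ in both directions, the linear-extension inequalities along a spanning strongly connected structure force $\pi_{F'}$ to be constant on $Y$. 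This proves the first assertion.

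\textbf{Second assertion.} Let $F$ be any optimal solution and suppose $w^\sim_{uv}>\max\{w^\succ_{uv},w^\succ_{vu}\}$ for $u,v$ in some $Y\in\mathcal T$. Form $F'$ from $F$ as above. If $F$ does not contain both $(u,v)$ and $(v,u)$, then by (F3) the relation of $\{u,v\}$ under $F$ is strict. Its contribution therefore strictly increases under $F'$, while no other pair's contribution decreases. This gives $g_W(F')>g_W(F)$, contradicting the optimality of $F$.
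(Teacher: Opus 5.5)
Your final construction, which symmetrizes exactly the pairs of each tie-dominant component joined in $H_W$, coincides with the paper's $F'=F\cup\mathcal T$, and your feasibility, objective, path-forcing and strict-improvement steps follow the paper's proof; the only difference is that you treat all tie-dominant components at once, while the paper treats them one at a time, and both are valid. The obstacle you anticipated does not arise, because (F2) already forbids $F$ from containing any arc on a pair absent from $H_W$; conversely, identifying the inter-component arcs of $F$ with those of $B_W$ needs (F2) in addition to Claim~\ref{lem:intercomponent-arcs}.
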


Claim~\ref{claim:symmetric-component} implies Theorem~\ref{thm:tie} as follows. 
Claim~\ref{claim:symmetric-component} implies that there exists a parsimonious extension $\pi_F$ that ties all candidates within every tie-dominant component. 
Since $\pi_F$ is optimal for $\texttt{LOPT}(W)$ by Lemma~\ref{lem:lopt-wucf}, this implies the first assertion of Theorem~\ref{thm:tie}. 
Next, let $u$ and $v$ be distinct candidates in the same tie-dominant component, and suppose that
$w^\sim_{uv}
>
\max\{w^\succ_{uv},w^\succ_{vu}\}$ holds.  
Let $\pi$ be any optimal solution to $\texttt{LOPT}(W)$.
By Lemma~\ref{lem:lopt-wucf}, $F_\pi$ is optimal for $\texttt{WUCF}(H_W,W)$.
Claim~\ref{claim:symmetric-component} implies that 
$(u,v),(v,u)\in F_\pi$. 
Then, by the definition of $F_\pi$, we have $\pi(u)=\pi(v)$.
Since $\pi$ was arbitrary, every optimal solution to $\texttt{LOPT}(W)$ ties $u$ and $v$.

\begin{proof}
Let $Y\in\mathcal Q_W$ be any tie-dominant component.
Define
\begin{align*}
\mathcal T
=
\{(x,y):(x,y)\in B_W, \, x,y\in Y\}.
\end{align*}
Since $Y$ is tie-dominant, by the definition of $B_W$, we have
$(x,y)\in\mathcal T$ if and only if $(y,x)\in\mathcal T$. 
Let $F$ be any optimal solution to $\texttt{WUCF}(H_W,W)$.
Define $F'=F\cup\mathcal T$. 
We show that $F'$ is feasible for $H_W$. 

We readily see that \textnormal{(F2)} and \textnormal{(F3)} hold for $F'$. 
Since $\mathcal T\subseteq B_W$, we have
$F'\setminus B_W
=
F\setminus B_W$. 
Since \textnormal{(F2)} holds for $F$, this implies that \textnormal{(F2)} also holds for $F'$.   
Moreover, we have $F\subseteq F'$. 
Therefore, \textnormal{(F3)} holds for $F'$ since \textnormal{(F3)} holds for $F$. 

It remains to show that \textnormal{(F1)} holds for $F'$.
By Claim~\ref{lem:intercomponent-arcs} and \textnormal{(F2)}, $F$ and $B_W$ have the same arcs between distinct components in $\mathcal Q_W$.
Since $\mathcal T$ contains only arcs whose endpoints belong to $Y$, for all $x,y\in V$ that belong to distinct components in $\mathcal Q_W$, we have
\begin{align*}
(x,y)\in F'
\quad\text{if and only if}\quad
(x,y)\in B_W.
\end{align*}
Suppose that a directed cycle of $(V,F')$ visits at least two distinct components.
Since every arc of the cycle between distinct components belongs to $B_W$,  the components visited by the cycle are mutually reachable in $H_W$, contradicting the definition of strongly connected components. 
Thus, every directed cycle of $(V,F')$ is contained in a single component in $\mathcal Q_W$.
We, however, show that the subgraph induced by each component is unicycle-free, showing that \textnormal{(F1)} holds for $F'$ and hence $F'$ is feasible for $H_W$. 

\paragraph{Case 1}
Consider a component other than $Y$.
The arc sets $F'$ and $F$ are the same within this component.
Since $(V,F)$ is unicycle-free, the subgraph of $(V,F')$ induced by this component is also unicycle-free.

\paragraph{Case 2}
Consider the component $Y$.
Let $(x,y)\in F'$ with $x,y\in Y$.
We show that $(y,x)\in F'$.
\begin{itemize}
\item 
Suppose that $(x,y)\in\mathcal T$.  
Then, the symmetry of $\mathcal T$ implies that $(y,x)\in\mathcal T\subseteq F'$.
\item 
Suppose that $(x,y)\notin\mathcal T$. 
Then, $(x,y)\in F$. 
Moreover, $(x,y)\notin B_W$ by the definition of $\mathcal T$.
Thus, we have $(x,y)\in F\setminus B_W$.
Then, $(y,x)\in B_W$ because $F$ satisfies \textnormal{(F2)}.
Since $x,y\in Y$, by the definition of $\mathcal T$, this implies that $(y,x)\in\mathcal T\subseteq F'$.
\end{itemize}
Therefore, every arc of $F'$ within $Y$ is accompanied by its reverse.
Hence, the subgraph induced by $Y$ is unicycle-free.

Next, we show that the objective value of $F'$ is at least that of $F$. 
For this purpose, it suffices to look at the contributions of the pairs $x,y\in Y$.
Suppose that $(x,y),(y,x) \notin B_W$. 
Then, all three weights associated with the pair are zero by the definition of $B_W$ and Observation~\ref{obs:normalization}.
Therefore, the pair contributes zero under both $F$ and $F'$. 
Next, suppose that 
\begin{align*}
(x,y)\in B_W \quad \text{or} \quad (y,x)\in B_W.
\end{align*}
The definition of a tie-dominant component and the definition of $B_W$ imply that both arcs belong to $B_W$.
Hence, $(x,y),(y,x)\in F'$, and we have $x\sim_{F'}y$.
Since $Y$ is tie-dominant, we have
\begin{align*}
w^\sim_{xy}
\geq
\max\{w^\succ_{xy},w^\succ_{yx}\}.
\end{align*}
Therefore, the contribution of $\{x,y\}$ under $F'$ is at least its contribution under $F$.
Thus, $g_W(F')\geq g_W(F)$. 
Since $F$ is optimal and $F'$ is feasible, $F'$ is also optimal for $\texttt{WUCF}(H_W,W)$. 

Since $F'$ is feasible for $H_W$, Lemma~\ref{lem:lopt-wucf} ensures that $(V,F')$ admits a parsimonious extension.
Let $\pi_{F'}$ be any such extension.
Since $Y\in\mathcal Q_W$, for any $x,y\in Y$, there exist directed paths from $x$ to $y$ and from $y$ to $x$ in $H_W$ on $Y$.
By the definition of $\mathcal T$, all arcs on these paths belong to $\mathcal T$ and hence to $F'$. 
Since $\pi_F$ is a linear extension, we have 
$\pi_{F'}(x)\leq\pi_{F'}(y)$ and $\pi_{F'}(y)\leq\pi_{F'}(x)$, whcih implies that 
\begin{align*}
\pi_{F'}(x)=\pi_{F'}(y).
\end{align*}
Thus, every parsimonious extension of $(V,F')$ ties all candidates in $Y$.
This construction changes only arcs whose endpoints both belong to $Y$.
Applying the construction successively to all tie-dominant components therefore yields an optimal solution whose every parsimonious extension ties all candidates within every tie-dominant component.
This proves the first assertion.

Finally, let $u$ and $v$ be distinct candidates in the same tie-dominant component $Y$, and suppose that
\begin{align*}
w^\sim_{uv}
>
\max\{w^\succ_{uv},w^\succ_{vu}\}.
\end{align*}
Let $F$ be any optimal solution to $\texttt{WUCF}(H_W,W)$.
Apply the construction above to $Y$ and let $F'=F\cup\mathcal T$.
By the definition of $B_W$, both $(u,v)$ and $(v,u)$ belong to $\mathcal T$. 
Suppose, to the contrary, that $F$ does not contain both arcs.
Then, by \textnormal{(F3)}, $F$ contains exactly one of $(u,v)$ and $(v,u)$.
Thus, $F$ ranks one candidate strictly above the other, whereas $F'$ ties them.
The strict inequality on the weights implies that the contribution of $\{u,v\}$ strictly increases from $F$ to $F'$.
The preceding comparison shows that the contribution of every other pair does not decrease.
Therefore, $g_W(F')>g_W(F)$. 
This contradicts the optimality of $F$.
Since $F$ was arbitrary, every optimal solution to $\texttt{WUCF}(H_W,W)$ contains both arcs $(u,v)$ and $(v,u)$.
\end{proof}

\section{Related work}\label{RelatedWork}

The \texttt{LOP} has been studied extensively from an optimization perspective.
Charon and Hudry~\cite{CharonHudry2007} survey the \texttt{LOP} on weighted and unweighted tournaments.
Grötschel et al.~\cite{Groetscheletal1984} developed a cutting-plane method, while Sukegawa et al.~\cite{Sukegawaetal2011} proposed a Lagrangian heuristics. 
Martí and Reinelt~\cite{MartiReinelt2011} survey exact and heuristic methods. 
These studies allow arbitrary pairwise weights but restrict feasible solutions to strict rankings.
The \texttt{LOPT} retains arbitrary pairwise weights and additionally allows ties.

The Kemeny rule has also been studied from an algorithmic perspective.
Conitzer et al.~\cite{Conitzeretal2006} developed lower bounds on the optimal Kemeny score for exact algorithms.
Ailon et al.~\cite{Ailonetal2008} proposed approximation algorithms for computing Kemeny rankings via feedback arc sets on tournaments.
Kenyon-Mathieu and Schudy~\cite{KenyonMathieuSchudy2007} subsequently developed a polynomial-time approximation scheme.
For rankings with ties, Emond and Mason~\cite{EmondMason2002} developed an exact method based on pairwise comparisons.
Yoo and Escobedo~\cite{YooEscobedo2021} proposed a binary programming formulation that also permits incomplete input rankings.
These studies derive their objective weights from input rankings.
In contrast, the \texttt{LOPT} permits arbitrary weights for the three possible relations associated with each candidate pair.

Finally, Condorcet-type criteria can be used to decompose rank aggregation problems, as discussed in the introduction.
Akbari and Escobedo~\cite{AkbariEscobedo2023} introduced the \texttt{GXCC} for rank aggregation under a parameterized distance with ties.
Their objective weights are determined by the input rankings and the distance parameter.
Our results apply to arbitrary \texttt{LOPT} weights, which need not arise from a distance-based rank aggregation problem.
In addition to fixing strict comparisons between strongly connected components of $H_W$, our results give conditions for ties within a component.

\section{Concluding remarks}\label{Conclusion}

We introduced the linear ordering problem with ties (\texttt{LOPT}).
We extended the properties established for the \texttt{LOP} by Ando et al.~\cite{Andoetal2022} to the \texttt{LOPT}.
This extension also strengthens the results of Yoo and Escobedo~\cite{YooEscobedo2021} for the Kemeny rule with ties.
We further established new properties that determine when candidates can or must be tied in an optimal solution.
These results give a more detailed description of the structure of optimal \texttt{LOPT} solutions.

Future work includes studying the social-choice meaning of this structure, such as its relation to choice sets defined from pairwise comparisons.
Another direction is to develop algorithms that use this structure to solve \texttt{LOPT} instances.

\section*{Acknowledgments}

This work was supported by JSPS KAKENHI Grant Number 25K08185.

\setlength{\bibsep}{0pt}
\bibliographystyle{cas-model2-names}
\bibliography{cas-refs}

\appendix

\section{Proofs for the graph representation}
\label{app:graph-representation}

\begin{proof}[Proof of Observation~\ref{lem:parsimonious-extension}]
Let $G=(V,A)$ be a unicycle-free directed graph.
Let $V_1,V_2,\ldots,V_m$ be the strongly connected components of $G$, indexed in topological order.
Define a ranking with ties $\pi:V\to[m]$ by setting $\pi(x)=i$ for every $x\in V_i$. 

We first show that $\pi$ is a linear extension of $G$. 
Let $(x,y)\in A$.
If $x$ and $y$ belong to the same component, then $\pi(x)=\pi(y)$.
If they belong to distinct components, say, $x\in V_i$ and $y \in V_j$, then we have $i<j$ from $(x,y)\in A$ and hence $\pi(x)<\pi(y)$.
Therefore, $\pi$ is a linear extension of $G$.

We next show that $\pi$ is a parsimonious extension of $G$. 
Suppose that $(x,y)\in A$ and $(y,x)\notin A$. 
Assume that $x$ and $y$ belong to the same strongly connected component.
Then, there is a directed path from $y$ to $x$.
The path together with $(x,y)$ contains a directed cycle that contains $(x,y)$.
However, the reverse of this cycle is not contained in $G$ because $(y,x)\notin A$.
Hence, the cycle is a unicycle.
This contradicts the unicycle-freeness of $G$.
Therefore, $x$ and $y$ belong to distinct components. 
In this case, $\pi(x)<\pi(y)$.
Thus, $\pi$ is parsimonious.

Conversely, let $G$ be a directed graph that admits a parsimonious extension $\pi$. 
Suppose, to the contrary, that $G$ contains a unicycle
$C=(v_1,v_2,\ldots,v_k,v_1)$.
Since $\pi$ is a linear extension, we have
\begin{align*}
\pi(v_1)
\leq
\pi(v_2)
\leq
\cdots
\leq
\pi(v_k)
\leq
\pi(v_1).
\end{align*}
Here, all the inequalities are equalities.
Since the reverse of $C$ is not contained in $G$, we have $(v_{i+1},v_i)\notin A$ for some $i\in[k]$, where $v_{k+1}=v_1$. 
However, since $\pi$ is a parsimonious extension, $\pi(v_i)<\pi(v_{i+1})$. 
This is a contradiction.
Therefore, $G$ is unicycle-free.
\end{proof}
\end{document}